\newtheorem{theorem}{Theorem}
\newtheorem{proposition}{Proposition}
\theoremstyle{definition}
\newtheorem{definition}{Definition}
\newtheorem{example}{Example}
\newtheorem{problem}{Problem}
\newtheorem{assumption}{Assumption}
\theoremstyle{remark}\newtheorem{remark}{Remark}
\newcommand{\bmat}{\begin{bmatrix}}
\newcommand{\emat}{\end{bmatrix}}
\newcommand{\innerprod}[2]{\langle{#1},\,{#2}\rangle}
\newcommand{\var}{\mathop{\rm var}}
\newcommand{\cov}{\mathop{\rm cov}}
\DeclareMathOperator{\diag}{diag}
\newcommand{\E}{{\mathbb E}}
\newcommand{\Rbb}{\mathbb R}
\newcommand{\Hbb}{\mathbb H}
\newcommand{\Zbb}{\mathbb Z}
\newcommand{\Nbb}{\mathbb N}
\newcommand{\Tbb}{\mathbb T}
\newcommand{\yb}{\mathbf  y}
\newcommand{\zb}{\mathbf  z}
\newcommand{\ab}{\mathbf a}
\newcommand{\kb}{\mathbf k}
\newcommand{\bb}{\mathbf  b}
\newcommand{\cb}{\mathbf  c}
\newcommand{\pb}{\mathbf  p}
\newcommand{\tb}{\mathbf t} 
\newcommand{\qb}{\mathbf q}  
\newcommand{\mb}{\mathbf  m}
\newcommand{\lb}{\boldsymbol{\ell}}
\newcommand{\Cb}{\mathbf C}
\newcommand{\Fb}{\mathbf F}
\newcommand{\Ib}{\mathbf I}
\newcommand{\Kb}{\mathbf K}
\newcommand{\Nb}{\mathbf N}
\newcommand{\Ub}{\mathbf U}
\newcommand{\Yb}{\mathbf Y}
\newcommand{\thetab}{\boldsymbol{\theta}}
\newcommand{\zerob}{\boldsymbol{0}}
\renewcommand{\d}{\mathrm{d}}
\renewcommand{\Re}{\mathrm{Re}}
\renewcommand{\Im}{\mathrm{Im}}
\newcommand{\F}{\mathrm{F}}
\newcommand{\nn}{\nonumber}
\begin{document}

\title{On the Statistical Consistency of a Generalized Cepstral Estimator}


\author{Bin Zhu and Mattia Zorzi
\thanks{B.~Zhu is with the School of Intelligent Systems Engineering, Sun Yat-sen University, Waihuan East Road 132, 510006 Guangzhou, China (email: \texttt{zhub26@mail.sysu.edu.cn}).}
\thanks{M.~Zorzi is with the Department of Information Engineering, University of Padova, Via Giovanni Gradenigo, 6b, 35131 Padova, Italy (email: \texttt{zorzimat@dei.unipd.it}).}
\thanks{This work was supported in part by the National Natural Science Foundation of China under the grant number 62103453 and the ``Hundred-Talent Program'' of Sun Yat-sen University. Corresponding author B.~Zhu. Tel. +86 14748797525. Fax +86(20) 39336557.}%
}


\maketitle

\begin{abstract}
We consider the problem to estimate the generalized cepstral coefficients of a stationary stochastic process or stationary multidimensional random field. It turns out that a naive version of the periodogram-based estimator for the generalized cepstral coefficients is not consistent. We propose a consistent estimator for those coefficients. Moreover, we show that the latter can be used in order to build a consistent estimator for a particular class of cascade linear stochastic systems. 
\end{abstract}

\begin{IEEEkeywords}
Generalized cepstral coefficients, periodogram, consistent estimator, system identification.
\end{IEEEkeywords}

\section{Introduction}\label{sec:intro}


Given a stationary time series $y_0, y_1, \dots, y_{N-1}$, the estimation of some second-order statistics of the series, e.g., the covariances $c_k := \E\, [y_{t+k} \,y_t^*]$ where $\E$ denotes mathematical expectation, has been a basic problem in the fields of signal processing, identification, and systems theory.
In particular, the covariance estimates are commonly used in practical applications where {high resolution spectral estimators are needed}. More specifically, we would like to mention the line of research on \emph{rational covariance extension}, see for instance \cite{BGL-98,SIGEST-01,Georgiou-01,byrnes2001cepstral,FPR-08,FRT-11,RFP-09,ringh2018multidimensional,zhu2018wellposed,FMP-12,Z-14,Zhu-Baggio-19}. 

Let $\Tbb:=[0,2\pi)$ be the domain for the angular frequency. It follows from {the} classical theory of stationary processes that the covariances of $y_t$ are the Fourier coefficients of the \emph{power spectral density} $\Phi(\theta)$ which is a \emph{nonnegative} function on $\Tbb$, that is
\begin{equation}\label{cov_def}
c_k = \int_{\Tbb} e^{ik\theta} \Phi(\theta) \frac{\d\theta}{2\pi}, \quad k\in \Zbb.
\end{equation}
In view of the above relation, the covariances $c_k$ can be viewed as \emph{moments} (a term from Mechanics) of the power spectrum $\Phi$ where the complex exponentials $\{e^{ik\theta}\}_{k\in\Zbb}$ play the role of \emph{basis functions}. Another important instance of moments is the cepstral coefficients
\begin{equation}\label{cep_def}
m_k := \int_{\Tbb} e^{ik\theta} \log\Phi(\theta) \frac{\d\theta}{2\pi}, \quad k\in \Zbb,
\end{equation}
assuming that the integration gives a finite number, e.g., when $\log\Phi\in L^1(\Tbb)$. These $m_k$'s find applications notably in speech processing, see e.g., \cite{ephraim1999second,ephraim2005second}.
In addition, cepstral coefficients represent a fundamental tool in covariance extension problems in order to obtain rational spectral densities with zeros, see \cite{enqvist2004aconvex,byrnes2001cepstral}.

Inspired by the integral representations \eqref{cov_def} and \eqref{cep_def}, a natural choice of the estimators $\hat{c}_k$ and $\hat{m}_k$ is to replace the true spectrum $\Phi$ with the \emph{periodogram} $\hat{\Phi}$ (a simple spectral estimator) and to discretize the integral into a finite summation.
An important question is whether these estimators are statistically consistent, that is: does $\hat{c}_k$ (and $\hat{m}_k$) converge in a certain stochastic sense to the true value $c_k$ (and $m_k$, respectively) as the length $N$ of the available time series goes to infinity?
Such a consistency property is useful in e.g., evaluating the consistency of those moment-based spectral estimators when the model class is properly chosen, see e.g., {\cite{Zhu-Zorzi-2021-cepstral,RFP-10-wellposedness}}. 

For the covariance estimator $\hat{c}_k$, an affirmative answer to the consistency question can be found in \cite[Subsec.~5.3.3]{priestley1981spectral} under some Gaussian assumptions, which is somewhat surprising because the periodogram $\hat{\Phi}$ alone is \emph{not} a consistent estimator of the true spectrum $\Phi$ since its variance (at each frequency $\theta$) does not converge to zero as $N\to \infty$, see e.g., \cite[p.~425]{priestley1981spectral}. Consistency of $\hat{c}_k$ comes out as a result of the averaging operation in the discrete version of \eqref{cov_def} involving $\hat{\Phi}$.
The consistency question for the cepstral estimator $\hat{m}_k$ is significantly harder than that for $\hat{c}_k$, because the latter estimator is linear in $\hat{\Phi}$ while the former one involves the logarithm (which is obviously nonlinear). The answer is still affirmative and can be found in \cite{ephraim1999second} under the assumption that the spectral components $Y_\ell$ (to be defined later in Section~\ref{sec:prob}) are independent Gaussian random variables.

In the case of multidimensional random fields, however, the {classical} cepstral coefficients \eqref{cep_def} do not always guarantee that the covariance extension problem admits {a} solution, see \cite{KLR-16multidimensional,ringh2018multidimensional}. 
In order to overcome this difficulty, in {the} recent papers \cite{Zhu-Zorzi-21,Zhu-Zorzi-2021-cepstral,ringh2015multidimensional} of the authors, the \emph{generalized cepstral coefficients} \cite{tokuda1990generalized}, which are the moments of  a power function of the spectrum, have been used to set up a rational covariance extension problem which admits {as the solution a} rational multidimensional spectral density with zeros.
To our surprise, the periodogram-based estimator for the generalized cepstral coefficients appeared to be inconsistent when we were carrying out simulation studies. However, it turns out in the later development of the current paper that the generalized cepstral estimator using $\hat{\Phi}$ is indeed \emph{consistent up to a constant multiplicative factor} under the same Gaussian assumption for the usual cepstral estimator in \cite{ephraim1999second}. In other words, the unwindowed periodogram suffices to produce a consistent estimator of the generalized cepstral coefficients and all one needs to do is to suitably rescale the computed quantities. To the best of our knowledge, such a consistency question has never been addressed before in the literature. Finally, we show that our result can be used to build a consistent estimator for a class of \emph{cascade} linear stochastic systems.
As explained in e.g., \cite{wahlberg2009variance,sandberg2014approximative}, cascade systems are very common in process industry, signal processing, and other engineering applications.

This paper is organized as follows. Section~\ref{sec:review_DFT} reviews the DFT and sets up the notation.
Section~\ref{sec:prob} formulates the problem of generalized cepstral estimation and its statistical consistency.
Section~\ref{sec:independ} shows the consistency of the generalized cepstral estimator under the assumption of independent Gaussian spectral components.
Section~\ref{sec:depend} deals with the more general case of correlated spectral components.
Section~\ref{sec:multidim} extends the consistency result of the generalized cepstral estimation to multidimensional random fields.
Section~\ref{sec:sysid} describes an application of the generalized cepstral estimator to cascade system identification, for which some simulation results are presented in Section~\ref{sec:sims}.
Finally, Section~\ref{sec:conclusions} draws the conclusions.

%

\section{Review of the Discrete Fourier Transform}\label{sec:review_DFT}

The aim of this short section is to clarify the notation of the DFT that is used throughout the paper. 

The standard DFT (in the signal processing literature) of a finite complex-valued sequence $y_0,\dots,y_{N-1}$ of length $N$ is defined as
\begin{equation}\label{def_DFT}
Y_\ell = \sum_{t=0}^{N-1} y_t\, e^{-it\frac{2\pi}{N}\ell}, \quad \ell=0,\dots,N-1.
\end{equation}
The inverse DFT is known as
\begin{equation}\label{def_IDFT}
y_t = \frac{1}{N} \sum_{\ell=0}^{N-1} Y_\ell \,e^{it\frac{2\pi}{N}\ell},\quad t=0,\dots,N-1.
\end{equation}
The above two expression can be rewritten in matrix forms as
\begin{equation}\label{vec_notation}
\Yb = \Fb \yb \ \text{ and } \ \yb = \frac{1}{N} \Fb^* \Yb,
\end{equation}
in which $\Yb=[Y_0, \dots, Y_{N-1}]^\top$, $\yb=[y_0, \dots, y_{N-1}]^\top$, and
\begin{equation}\label{DFT_mat}
\Fb = \bmat \omega_N^{0\cdot 0} & \omega_N^{0\cdot 1} & \cdots & \omega_N^{0\cdot (N-1)}\\ 
 \omega_N^{1\cdot 0} & \omega_N^{1\cdot 1} & \cdots & \omega_N^{1\cdot (N-1)}\\ 
 \vdots & \vdots & \ddots & \vdots \\
 \omega_N^{(N-1)\cdot 0} & \omega_N^{(N-1)\cdot 1} & \cdots & \omega_N^{(N-1)\cdot (N-1)}\\ 
\emat
\end{equation}
is the DFT matrix where $\omega_N = e^{-i\frac{2\pi}{N}}$ is a primitive $N$-th root of unity. It follows immediately that $\Fb^{-1} = \frac{1}{N}\Fb^*$. Notice also that $\Fb$ is a complex symmetric matrix so that $\Fb^*=\overline{\Fb}$ (elementwise complex conjugate).





\section{Problem Statement}\label{sec:prob}


In order to introduce the generalized cepstral estimation problem, we need to first review some basic notions of the periodogram and covariance estimation.

Consider a zero-mean second-order stationary complex-valued discrete-time random signal $y_t$ (so that $t\in\Zbb$). Suppose that we are given a finite number of random samples $y_0, y_1, \dots, y_{N-1}$. 
Taking the (unilateral) finite Fourier transform, we obtain
\begin{equation}\label{spec_comp_theta}
Y(\theta) = \sum_{t=0}^{N-1} y_t \, e^{-it\theta},
\end{equation}
a random variable that depends on the angular frequency $\theta\in\Tbb=[0,2\pi)$. Notice that the frequency interval $[0,2\pi)$ is understood as the quotient group $\Rbb / 2\pi\Zbb$, so that the function $Y(\theta)$ is $2\pi$-periodic on $\Rbb$. In the spectral theory of stationary processes, $\Tbb$ is often identified as the unit circle on the complex plane, and one writes $Y(e^{i\theta})$ instead \cite{LP15}. Here we keep the simpler notation $Y(\theta)$. The (unwindowed) periodogram, which is a first estimate of the power spectrum $\Phi(\theta)$ of the random process $y_t$, is defined as
\begin{equation}
\hat{\Phi}(\theta) := \frac{1}{N} |Y(\theta)|^2.
\end{equation}
In addition, the periodogram $\hat{\Phi}$ admits a \emph{correlogram} interpretation as the \emph{bilateral} finite Fourier transform of the standard biased covariance estimates \cite{stoica2005spectral}, that is,
\begin{equation}\label{Phi_hat_correlo}
\hat{\Phi}(\theta) = \sum_{k=-N+1}^{N-1} \hat{c}_k e^{-ik\theta},
\end{equation}
where
\begin{equation}\label{cov_est_average}
\hat{c}_k = \frac{1}{N} \sum_{t=0}^{N-1-k} y_{t+k} \, y_t^*, \quad k=0, 1, \dots, N-1
\end{equation}
and $\hat{c}_{-k} = \hat{c}_k^*$. Here $(\cdot)^*$ means taking complex conjugate when applied to a complex number. 

In practice, the periodogram $\hat{\Phi}$ is evaluated on a regular grid of $\Tbb$ of size $N$ equal to the length of the time series $y_t$. In other words, the frequency interval $[0,2\pi)$ is partitioned into $N$ subintervals of equal length $2\pi/N$, and the grid points are collected in the set
\begin{equation}\label{T_grid}
\Tbb_{N} := \left\{ \frac{2\pi}{N}\ell \,:\, \ell = 0,1,\dots,N-1 \right\}.
\end{equation}
Let us write
$Y_\ell = Y({2\pi\ell}/{N})$, and similarly $\hat\Phi_\ell = \hat\Phi({2\pi\ell}/{N})$
for simplicity.
The random variables $\{Y_\ell\}_{\ell=0}^{N-1}$ are termed \emph{spectral components} of the process $y_t$ \cite{ephraim1999second}.
The covariance estimates can then be computed from the discrete spectrum $\{\hat{\Phi}_\ell\}_{\ell=0}^{N-1}$ via the inverse DFT (see also Sec.~\ref{sec:review_DFT})
\begin{equation}\label{cov_estimates}
\hat{c}_k \approx \frac{1}{N} \sum_{\ell=0}^{N-1} e^{ik\frac{2\pi}{N}\ell} \hat{\Phi}_\ell,
\end{equation}
which is seen as an approximation to the Fourier integral
\begin{equation}\label{cov_integral}
\hat{c}_k = \int_{\Tbb} e^{ik\theta} \hat{\Phi}(\theta) \frac{\d\theta}{2\pi}.
\end{equation}

\begin{remark}
	Alternatively, we can evaluate the periodogram on a denser grid of size $K\geq 2N-1$. The motivation for doing so is that such a choice of the grid size  	
	is sufficient to guarantee that
	the covariance estimates can be \emph{exactly} recovered from the discrete spectrum $\{\hat{\Phi}_\ell\}_{\ell=0}^{K-1}$ via the formula 
	\begin{equation}\label{cov_estimates_K}
	\hat{c}_k = \frac{1}{K} \sum_{\ell=0}^{K-1} e^{ik\frac{2\pi}{K}\ell} \hat{\Phi}(2\pi\ell/K),
	\end{equation}
	in contrast to the usual Fourier integral \eqref{cov_integral}.
	Computation via \eqref{cov_estimates_K} is more efficient than the direct time average \eqref{cov_est_average} thanks to the FFT routines. However, we also want to point out that as $N$ (hence also $K$) tends to infinity, the difference between \eqref{cov_estimates} and \eqref{cov_estimates_K} will become negligible.
\end{remark}

In this paper, we deal with the problem of estimating from samples of a stationary process its generalized cepstral coefficients. The latter object is obtained via replacing the logarithm in the definition of the classical cepstral coefficients \eqref{cep_def} with the so-called \emph{generalized logarithmic function}:
\begin{equation}
s_{\alpha}(x) = 
\begin{cases}
(x^\alpha-1)/\alpha, & 0<|\alpha|\leq 1, \\
\log x, & \alpha=0.
\end{cases}
\end{equation}
The generality comes from the fact that for a fixed $x>0$, the limit $(x^\alpha-1)/\alpha \to \log x$ holds as the parameter $\alpha\to 0$. The generalized logarithm leads to the following definition where we are mostly interested in positive $\alpha$'s.
\begin{definition}[\hspace{1sp}\cite{tokuda1990generalized}]
	Given a power spectral density $\Phi(\theta)$, a real number $\alpha\in (0,1]$, the generalized cepstral coefficients are defined as 
	\begin{equation}\label{def_cep_gen}
	m_{\alpha,k} = \begin{cases}
	\frac{1}{\alpha} \int_{\Tbb} e^{i k \theta} \Phi(\theta)^\alpha \frac{\d\theta}{2\pi} & \text{if } k\neq 0, \\
	 \\
	\frac{1}{\alpha} \left[ \int_{\Tbb} \Phi(\theta)^\alpha \frac{\d\theta}{2\pi} -1 \right] & \text{if } k=0.
	\end{cases}
	\end{equation}
\end{definition}
Notice that the constant one in the numerator of $(x^\alpha-1)/\alpha$ disappears in the case of $k\neq 0$ due to the integration against the complex exponential function $e^{ik\theta}$.  Moreover, we are not so interested in $\alpha=1$ since in that case, $m_{\alpha,k}$ reduces to the covariance $c_k$ in \eqref{cov_def} (with a difference of one for $k= 0$).
Thus in the remaining part of this paper, we only consider the case $\alpha\in(0,1)$.
With a slight abuse of the notation, we shall omit $\alpha$ in the subscript and simply write $m_k$ instead.


An estimator of the generalized cepstral coefficients using the periodogram, inspired by \eqref{cov_estimates}, is given as
\begin{equation}\label{est_cep_gen}
\hat{m}_k = \begin{cases}
\frac{1}{\alpha N} \sum_{\ell=0}^{N-1} e^{ik\frac{2\pi}{N}\ell} \hat{\Phi}_\ell^\alpha & \text{if } k\neq 0, \\
 \\
\frac{1}{\alpha} \left( \frac{1}{N} \sum_{\ell=0}^{N-1} \hat{\Phi}_\ell^\alpha -1 \right) & \text{if } k=0.
\end{cases}
\end{equation}
We end this section by formally stating the consistency problem for the generalized cepstral estimator above.

\begin{problem}
	Given a sample path $y_0, \dots, y_{N-1}$ of a zero-mean stationary complex random process $y_t$, {understand} whether the estimator \eqref{est_cep_gen} converges to the true value \eqref{def_cep_gen} in some stochastic sense as the length of the sample path $N$ tends to infinity.
\end{problem}


\section{Consistency in the Case of Independent Spectral Components}\label{sec:independ}

First, let us introduce a normalized version of the spectral components:
\begin{equation}
\bar{Y}_\ell := \frac{1}{\sqrt{N}} Y_\ell \ \text{ so that }\ \hat\Phi_\ell=|\bar{Y}_\ell|^2.
\end{equation}

\begin{assumption}\label{assump_independ}
	The normalized spectral components $\{\bar Y_\ell\}_{\ell=0}^{N-1}$ of the signal $\{y_t\}_{t=0}^{N-1}$ are independent zero-mean complex (circular) Gaussian random variables such that $\var \bar{Y}_\ell = \E\, \hat{\Phi}_\ell = \lambda_\ell>0$. 
	More precisely, the real and imaginary parts of each $Y_\ell$ are independent real Gaussian random variables having equal variance, namely $\lambda_\ell/2$, see e.g., \cite[p.~361]{stoica2005spectral}.
\end{assumption}


Observe that the above assumption holds in the following two cases.

\emph{Case 1.} The process $y_t$ is i.i.d. Gaussian with a variance $\sigma^2>0$, e.g., a Gaussian white noise. In this case, the random vector $\bar{\Yb}=\frac{1}{\sqrt{N}}\Fb\yb$ (see \eqref{vec_notation} for the notation) is Gaussian with a covariance matrix
\begin{equation}
\E \bar{\Yb}\bar{\Yb}^*  = \frac{1}{N}\Fb (\E\yb\yb^*) \Fb^* = \sigma^2 \Ib,
\end{equation}
where $\Fb$ is the DFT matrix in \eqref{DFT_mat}.

\emph{Case 2.} The process $y_t$ is Gaussian and $N$-periodic, i.e., $y_t=y_{t+N}$ almost surely for all $t\in\Zbb$. An immediate consequence is the cyclic symmetry $c_{N-k}=c_{-k}=c_k^*$ for the covariances. It follows that the covariance matrix of the random vector $\yb$ has a \emph{circulant} (more than being Toeplitz) structure, namely
\begin{equation}
\Cb := \E \yb\yb^* 
= \bmat c_0 & c_{N-1} & \cdots & c_2 & c_1 \\
c_1 & c_0 & c_{N-1} &  & c_2 \\
\vdots & c_1 & c_0 & \ddots & \vdots \\
c_{N-2} &  & \ddots & \ddots & c_{N-1} \\
c_{N-1} & c_{N-2} & \cdots & c_1 & c_0 \emat.
\end{equation}
In plain words, each column of $\Cb$ is the cyclic shift of the first column (the same for rows). Since $\Cb$ is also a covariance matrix, we shall in addition require it to be positive definite\footnote{The Hermitian structure of $\Cb$ comes from the relation $c_{N-k}=c_k^*$.}. It is well known that any circulant matrix can be diagonalized by the DFT matrix. More precisely, define a unitary matrix $\Ub := \frac{1}{\sqrt{N}}\Fb$ and let $\cb = [c_0,\dots,c_{N-1}]^\top$ be the first column of $\Cb$. Then the cirulant matrix $\Cb$ admits a {spectral} decomposition $\Cb=\Ub^* \Psi \Ub$ where the columns of {$\Ub^*$} are the normalized eigenvectors and the diagonal matrix $\Psi = \diag(\Fb\cb)$ contains the positive eigenvalues \cite{Gray-2006}, see also \cite{LPcirculant-13}.
Therefore, the covariance matrix of the vector $\bar{\Yb}$ of spectral components is
\begin{equation}
\E \bar{\Yb}\bar{\Yb}^* = \frac{1}{N}\Fb \Cb \Fb^* = \Psi.
\end{equation}



The main result of this section is summarized below.

\begin{theorem}\label{thm_gen_cep_est}
	Under Assumption~\ref{assump_independ}, the estimator \eqref{est_cep_gen} of the generalized cepstral coefficients is consistent up to a constant multiplicative factor. More precisely, we have for $k\neq 0$,
	\begin{equation}\label{ms_converg_hat_mk}
	C \hat{m}_k = \frac{C}{\alpha N} \sum_{\ell=0}^{N-1} e^{ik\frac{2\pi}{N}\ell} \hat{\Phi}_\ell^\alpha \xrightarrow{\text{m.s.}} m_k
	\end{equation}
	and
	\begin{equation}\label{ms_converg_hat_m0}
	C\hat{m}_0 +\frac{C-1}{\alpha} = \frac{C}{\alpha N} \sum_{\ell=0}^{N-1} \hat{\Phi}_\ell^\alpha - \frac{1}{\alpha} \xrightarrow{\text{m.s.}} m_0
	\end{equation}
	as $N\to\infty$, where the constant $C = 1/\Gamma(\alpha+1)$ and the convergence is understood in the mean-square sense. Here $\Gamma(\cdot)$ is the gamma function.
\end{theorem}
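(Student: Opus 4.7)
The plan is to prove the mean-square convergence by the standard bias/variance decomposition: I will show that $\mathbb{E}[C\hat{m}_k] \to m_k$ and separately that $\mathrm{Var}(C\hat{m}_k) \to 0$ as $N\to\infty$. The single ingredient that makes both steps clean is an \emph{exact} formula for the fractional moments of the periodogram ordinates, and the scaling constant $C = 1/\Gamma(\alpha+1)$ will emerge naturally from that formula.

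The key observation is that Assumption~\ref{assump_independ} forces $\hat{\Phi}_\ell = |\bar Y_\ell|^2$ to be exponentially distributed with mean $\lambda_\ell$, since the real and imaginary parts of $\bar Y_\ell$ are independent $\mathcal{N}(0,\lambda_\ell/2)$ random variables. A direct integration against the exponential density then gives
\begin{equation*}
\mathbb{E}[\hat{\Phi}_\ell^\beta] \;=\; \lambda_\ell^\beta\, \Gamma(\beta+1), \qquad \beta>0.
\end{equation*}
Applying this with $\beta=\alpha$ to \eqref{est_cep_gen} and multiplying by $C$ yields, for $k\neq 0$,
\begin{equation*}
\mathbb{E}[C\hat{m}_k] \;=\; \frac{1}{\alpha N}\sum_{\ell=0}^{N-1} e^{ik\frac{2\pi}{N}\ell}\, \lambda_\ell^\alpha,
\end{equation*}
which is a Riemann-type discretization of $m_k = \frac{1}{\alpha}\int_\Tbb e^{ik\theta}\Phi(\theta)^\alpha \frac{\d\theta}{2\pi}$; a parallel calculation for $k=0$ shows that the constant shift $-1/\alpha$ in \eqref{ms_converg_hat_m0} is exactly what is needed to absorb the ``$-1$'' in the definition of $m_0$. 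In Cases~1 and~2 discussed right after Assumption~\ref{assump_independ}, the identification $\lambda_\ell = \Phi(2\pi\ell/N)$ holds on the grid, so a routine continuity/Riemann-sum argument on the torus gives $\mathbb{E}[C\hat{m}_k] \to m_k$. For the variance, the independence of the $\bar Y_\ell$'s together with the moment formula applied with $\beta = 2\alpha$ produces
\begin{equation*}
\mathrm{Var}(C\hat{m}_k) \;=\; \frac{\Gamma(2\alpha+1) - \Gamma(\alpha+1)^2}{\alpha^2\,\Gamma(\alpha+1)^2}\cdot \frac{1}{N^2}\sum_{\ell=0}^{N-1} \lambda_\ell^{2\alpha} \;=\; O(N^{-1}),
\end{equation*}
since $N^{-1}\sum_\ell \lambda_\ell^{2\alpha}$ is itself a bounded Riemann sum for $\int_\Tbb \Phi(\theta)^{2\alpha}\frac{\d\theta}{2\pi}$.

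The main obstacle I anticipate is making the Riemann-sum step for the bias airtight at the desired level of generality. In Case~1 with $\Phi\equiv\sigma^2$ the convergence is trivial, and in Case~2 the circulant spectral decomposition gives $\lambda_\ell = \Phi(2\pi\ell/N)$ with $\Phi(\theta)=\sum_k c_k e^{-ik\theta}$, reducing matters to a standard Riemann-sum estimate on the torus under mild regularity of $\Phi$. For any more general situation still compatible with Assumption~\ref{assump_independ} one would need either continuity (indeed positivity) of $\Phi$, so that $\Phi^\alpha$ is continuous and the sum converges uniformly, or a density/approximation argument to convert the pointwise asymptotic unbiasedness $\lambda_\ell \to \Phi(\cdot)$ into uniform control of the discrete sum. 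By contrast, the variance step is purely algebraic and is rendered fully robust by the closed-form exponential moment identity.
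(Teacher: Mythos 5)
Your proposal is correct and follows essentially the same route as the paper: the exact fractional-moment identity $\E[\hat{\Phi}_\ell^\beta]=\lambda_\ell^\beta\Gamma(\beta+1)$ for the exponentially distributed periodogram ordinates, a Riemann-sum argument for the bias (with $\lambda_\ell\to\Phi(2\pi\ell/N)$), and independence plus the $\beta=2\alpha$ moment to show the variance is $O(N^{-1})$. Your closing caveat about the regularity of $\Phi$ needed for the Riemann-sum step is exactly the content of the paper's Remarks~\ref{rem_regular_Phi} and \ref{rem_Lp_Phi}.
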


%
%


The proof of this theorem is built on a well-known fact: if (and only if) the variance of a sequence of complex-valued random variables tends to zero, then the sequence itself converges to a constant in the mean-square sense. Moreover, such a constant is the limit of the expectation of the sequence.
Therefore, the claim of Theorem~\ref{thm_gen_cep_est} can be established if we can show in the case of $k\neq 0$ that $\E(C\hat{m}_k) \to m_k$ for the given constant $C$ and that $\var \hat{m}_k \to 0$ as the number of samples $N$ tends to infinity, and similarly for the case $k=0$.
The rest of this section is devoted to the computation of the expectation and the variance of $\hat{m}_k$.

\subsection{Mean of $\hat{\Phi}_\ell^\alpha$ and $\hat{m}_k$}


It follows from Assumption~\ref{assump_independ} that $\hat\Phi_\ell=|\bar{Y}_\ell|^2 = (\Re \bar{Y}_\ell)^2 + (\Im \bar{Y}_\ell)^2$ is $\chi^2$-distributed with a degree of freedom $2$ after suitable scaling.
In fact, the probability density function of $\hat{\Phi}_\ell$ is
\begin{equation}
f(x) = \frac{1}{\lambda_\ell} e^{-x/\lambda_\ell}, \quad x\geq 0,
\end{equation}
which is just an exponential distribution.
The following computation is straightforward:
\begin{equation}\label{expect_hatPhi_power}
\E(\hat{\Phi}_\ell^\alpha) = \frac{1}{\lambda_\ell} \int_{0}^{\infty} x^{\alpha} e^{-x/\lambda_\ell} \d x = \lambda_\ell^\alpha \Gamma(\alpha+1),
\end{equation}
where the second equality comes from \cite[Eq.~3.381.4]{gradshteyn2014table}. 



We are now ready to compute the mean of $\hat{m}_k$.
For $k\neq 0$, we have
\begin{subequations}\label{expect_hat_mk}
\begin{align}
\E \,\hat{m}_k & = \E \left( \frac{1}{\alpha N} \sum_{\ell=0}^{N-1} e^{ik\frac{2\pi}{N}\ell} \hat{\Phi}_\ell^\alpha \right) \nonumber \\
 & = \frac{1}{\alpha N} \sum_{\ell=0}^{N-1} e^{ik\frac{2\pi}{N}\ell} \lambda_\ell^\alpha \Gamma(\alpha+1) \label{expect_hat_mk_a} \\
 & \to \frac{\Gamma(\alpha+1)}{\alpha} \int_{\Tbb} e^{i k \theta} \Phi(\theta)^\alpha \frac{\d\theta}{2\pi} \quad \text{ as }\ N\to\infty. \label{expect_hat_mk_b}
\end{align}
\end{subequations}
To see the latter limit, recall first that under a mild condition for the decay of the true covariances $c_k$ of the process $y_t$, $\E\, \hat{\Phi}_\ell = \lambda_\ell \to \Phi(2\pi\ell/N)$ as $N\to\infty$ for each $\ell\in\{0,\dots,N-1\}$ where $\Phi$ is the true spectrum \cite[p.~7]{stoica2005spectral}.
Therefore, the term $\frac{1}{N} \sum_{\ell=0}^{N-1} e^{ik\frac{2\pi}{N}\ell} \lambda_\ell^\alpha$ in \eqref{expect_hat_mk_a} converges to the integral in \eqref{expect_hat_mk_b}, where the convergence is understood from the normalized Riemann sum to the integral. 


Similarly for $k=0$, we have 
\begin{equation}
\begin{aligned}\label{expect_hat_m0}
\E \,\hat{m}_0 + \frac{1}{\alpha} & = \E \left( \frac{1}{\alpha N} \sum_{\ell=0}^{N-1} \hat{\Phi}_\ell^\alpha \right) \\
 & \to \frac{\Gamma(\alpha+1)}{\alpha} \int_{\Tbb} \Phi(\theta)^\alpha \frac{\d\theta}{2\pi} \quad \text{ as }\ N\to\infty,
\end{aligned}
\end{equation}
which is equivalent to $\E [C\hat{m}_0 + (C-1)/\alpha]\to m_0$ for the random variable in \eqref{ms_converg_hat_m0} and the constant $C$ given in Theorem~\ref{thm_gen_cep_est}.


\subsection{Variance of $\hat{\Phi}_\ell^\alpha$ and $\hat{m}_k$}


The second moment of $\hat{\Phi}_\ell^\alpha$ is just
\begin{equation}
\E [(\hat{\Phi}_\ell^\alpha)^2] = \E (\hat{\Phi}_\ell^{2\alpha}) = \lambda_\ell^{2\alpha} \Gamma(2\alpha+1),
\end{equation}
where the second equality follows immediately from \eqref{expect_hatPhi_power} since they have the same functional form.
Consequently, we have
\begin{equation}\label{var_hatPhi_l}
\begin{aligned}
\var (\hat{\Phi}_\ell^\alpha) & = \E (\hat{\Phi}_\ell^{2\alpha}) - [\E(\hat{\Phi}_\ell^\alpha)]^2 \\
 & = \lambda_\ell^{2\alpha} \left\{ \Gamma(2\alpha+1) - [\Gamma(\alpha+1)]^2 \right\}.
\end{aligned}
\end{equation}


\begin{example}\label{ex_alpha_nu}
	In the papers \cite{Zhu-Zorzi-21,Zhu-Zorzi-2021-cepstral}, we took some special values of the parameter $\alpha$, i.e.,
	\begin{equation}\label{cond_alpha}
	\alpha = 1-\frac{1}{\nu},\quad \nu\in\Nbb_+, \quad \nu\geq 2.
	\end{equation}
	For instance, take $\nu=2$ so $\alpha=1/2$,
	and we have
		\begin{equation}
		\begin{aligned}
		\E(\hat{\Phi}_\ell^\alpha) & = \sqrt{\lambda_\ell}\,\Gamma(3/2) = \frac{\sqrt{\pi \lambda_\ell}}{2}, \\  
		\E(\hat{\Phi}_\ell^{2\alpha}) & = \lambda_\ell \Gamma(2) = \lambda_\ell,
		\end{aligned}
		\end{equation}
		and $\var(\hat{\Phi}_\ell^\alpha) = (1-\pi/4) \lambda_\ell$.
\end{example}

Now we can continue to compute the variance of $\hat{m}_k$, where the independence between the spectral components in Assumption~\ref{assump_independ} plays an important role.
For $k\neq 0$, we have 
\begin{subequations}\label{calc_var_hat_mk}
\begin{align}
\var(\hat{m}_k) & = \var \left( \frac{1}{\alpha N} \sum_{\ell=0}^{N-1} e^{ik\frac{2\pi}{N}\ell} \hat{\Phi}_\ell^\alpha \right) \nonumber \\
 & = \frac{1}{\alpha^2 N^2} \sum_{\ell=0}^{N-1} \var (\hat{\Phi}_\ell^\alpha) \label{calc_var_step4} \\
 & = \frac{C_1}{\alpha^2 N^2}  \sum_{\ell=0}^{N-1} \lambda_\ell^{2\alpha} \to 0 \ \text{ as }\ N\to\infty, \label{calc_var_step5}
\end{align}
\end{subequations}
where,
	in \eqref{calc_var_step4} we have used the independence assumption between the spectral components $\{\bar Y_\ell\}_{\ell=0}^{N-1}$, and in \eqref{calc_var_step5} the constant $C_1 = \Gamma(2\alpha+1) - [\Gamma(\alpha+1)]^2$
	is determined from \eqref{var_hatPhi_l}. 
To see the last limit, 
recognize
\begin{equation}\label{converg_var_hat_mk}
\frac{1}{N} \sum_{\ell=0}^{N-1} \lambda_\ell^{2\alpha} \to \int_{0}^{2\pi} \Phi(\theta)^{2\alpha} \frac{\d\theta}{2\pi}
\end{equation}
as the convergence of the (normalized) Riemann sum to the integral. Hence the Riemann sum is bounded, and its product with $\frac{C_1}{\alpha^2 N}$ tends also to zero as $N\to\infty$.

For $k=0$, we have 
\begin{equation}
\var (\hat{m}_0) = \var  \left( \frac{1}{\alpha N} \sum_{\ell=0}^{N-1} \hat{\Phi}_\ell^\alpha \right) 
\end{equation}
which leads to the same result as \eqref{calc_var_step4}. Therefore, we also have $\var (\hat{m}_0)\to0$ as $N\to \infty$.


\begin{remark}\label{rem_regular_Phi}
	It is implicitly assumed that the underlying true spectral density function $\Phi$ is sufficiently ``nice'' in order for such convergences as \eqref{expect_hat_mk_b} and \eqref{converg_var_hat_mk} to make sense. For example, when the true covariance sequence $c_k$ belongs to the Wiener class, i.e., $\sum_{k=-\infty}^{\infty} |c_k|<\infty$, the corresponding spectral density $\Phi$ is in fact continuous (a consequence of Lebesgue's dominated convergence theorem).
\end{remark}

\begin{remark}\label{rem_Lp_Phi}
	In view of the formulas \eqref{cov_def}, \eqref{expect_hat_mk_b}, and \eqref{converg_var_hat_mk}, the true spectral density $\Phi$ of the underlying process $y$ has to satisfy the integrability condition: $\Phi\in L^1(\Tbb)\cap L^\alpha(\Tbb)\cap L^{2\alpha}(\Tbb)$. Since the set $\Tbb$ has a finite Lebesgue measure, we have for $0<p<q\leq\infty$ that $L^q(\Tbb)\subset L^p(\Tbb)$ \cite{Villani-85}. Recall that the parameter $\alpha$ takes value in the interval $(0,1)$. Therefore, the former intersection reduces to $L^1(\Tbb)$ if $0<\alpha\leq 1/2$, and to $L^{2\alpha}(\Tbb)$ if $1/2<\alpha<1$. For both cases, $\Phi\in L^2(\Tbb)$ is a sufficient condition.
\end{remark}

\section{The Case of Correlated Spectral Components}\label{sec:depend}

	
Uncorrelatedness between spectral components as dictated by Assumption~\ref{assump_independ} holds in an asymptotic sense \cite[Theorem~6.2.3, p.~426]{priestley1981spectral}. With a finite number of measurements (samples) however, two spectral components $\bar Y_{\ell_1}$ and $\bar Y_{\ell_2}$ with $\ell_1\neq \ell_2$ are in general correlated. 
Indeed, a direct computation using the unnormalized spectral components \eqref{spec_comp_theta} leads to the expression
\begin{equation}
\cov(Y_{\ell_1}, Y_{\ell_2}) = \frac{1}{1- e^{-i(\theta_1-\theta_2)}} \left[ \sum_{k=0}^{N-1} c_k a_k - \sum_{k=-N+1}^{-1} c_k a_k  \right]
\end{equation}
where $\theta_j = 2\pi\ell_j/N$ for $j=1, 2$, $c_k$ {is} the true covariance of the process $y_t$ in \eqref{cov_def}, and $a_k = e^{-i\theta_1 k} - e^{-i\theta_2 k}$.

	
The consistency problem for the classical cepstral estimation in the case of correlated spectral components has been investigated in  \cite{ephraim2005second}. In what follows we review some well known facts that will be used later.  Assume that $\bar Y_{\ell_1}$ and $\bar Y_{\ell_2}$ are zero-mean jointly {circular} Gaussian random variables, that is, we drop the independence part in Assumption~\ref{assump_independ}.
Let 
\begin{equation}
R_{\ell_1\ell_2}=\left[\begin{array}{cc} \lambda_{\ell_1} & r_{\ell_1\ell_2} \\r_{\ell_1\ell_2}^* & \lambda_{\ell_2}  \\\end{array}\right]
\end{equation}
be the covariance matrix of the random vector $[\, \bar Y_{\ell_1}\; \bar Y_{\ell_2}\,]^\top$. According to \cite[Eq.~16]{ephraim2005second}, we have for $\alpha\in(0,1)$ that
\begin{align}\label{cross_exp}
\E[\hat{\Phi}_{\ell_1}^\alpha
\hat{\Phi}_{\ell_2}^{\alpha}] & =\E[\,|\bar Y_{\ell_1}|^{2\alpha}|\bar Y_{\ell_2}|^{2\alpha}\,]\nn \\ 
 & =\lambda_{\ell_1}^\alpha \lambda_{\ell_2}^\alpha [\Gamma(\alpha+1)]^2 \,{}_2F_1(-\alpha,-\alpha;1;\rho_{\ell_1\ell_2}^2)
\end{align}
where,
$$\rho_{\ell_1\ell_2}^2 := |r_{\ell_1\ell_2}|^2/(\lambda_{\ell_1}\lambda_{\ell_2})
$$
is the modulus squared of the Pearson correlation coefficient between $\bar Y_{\ell_1}$ and $\bar Y_{\ell_2}$, and
\begin{equation}\label{hypergeom}
{}_2 F_1(a,b;c;z)=\sum_{n=0}^\infty \frac{(a)_n (b)_n}{(c)_n}\frac{1}{n!}z^n
\end{equation} 
is the hypergeometric series \cite[Ch.~15]{abramowitz1972handbook} with a convergence region $|z|<1$ (when $c$ is not a negative integer). Here
$$ (a)_n=\left\{\begin{array}{ll}1, & n=0  \\ a(a+1)\cdot \ldots \cdot (a+n-1), & n>0  \end{array}\right. $$ 
is the \emph{Pochhammer symbol} for the \emph{rising factorial}.
In addition, the series \eqref{hypergeom} also converges on the unit circle $|z|=1$ if $\Re(c-a-b)>0$ which holds in the case of \eqref{cross_exp}.

Next, we show that in the case with correlated spectral components, the statistical consistency of the estimator \eqref{est_cep_gen} using the unwindowed periodogram \emph{cannot} be guaranteed in general, although (\ref{expect_hat_mk}) and (\ref{expect_hat_m0}) still hold because they have been derived without using the independence assumption. 
We shall only work on $\hat{m}_k$ with $k\neq 0$ since the conclusion for the case of $k=0$ follows similarly.

It is not difficult to see that
\begin{align}\label{var_hat_m}
 & \var(\hat m_k)=\E\left[ |\hat m_k - \E\,\hat m_k|^2 \right] \nn\\
= & \frac{1}{\alpha^2 N^2} \sum_{\ell_1=0}^{N-1} \sum_{\ell_2=0}^{N-1} e^{ik\frac{2\pi}{N}(\ell_1-\ell_2)} \left[ \E(\hat{\Phi}_{\ell_1}^\alpha
\hat{\Phi}_{\ell_2}^{\alpha})-\E(\hat{\Phi}_{\ell_1}^\alpha) \E(\hat{\Phi}_{\ell_2}^{\alpha}) \right] \nn\\
= & \frac{1}{\alpha^2 N^2} \sum_{\ell_1=0}^{N-1} \sum_{\ell_2=0}^{N-1} e^{ik\frac{2\pi}{N}(\ell_1-\ell_2)} \lambda_{\ell_1}^\alpha \lambda_{\ell_2}^\alpha [\Gamma(\alpha+1)]^2 \nn\\
& \hspace{2.3cm} \times \left[ {}_2F_1(-\alpha,-\alpha;1;\rho_{\ell_1\ell_2}^2) -1 \right]
\end{align}
where in the last equality we exploited \eqref{expect_hatPhi_power} and \eqref{cross_exp}.  
The computation can be continued as follows:
\begin{subequations}\label{seq_inequal}
\begin{align}
 & \var(\hat m_k) = |\var(\hat m_k)| \nn \\
\leq & \frac{[\Gamma(\alpha+1)]^2}{\alpha^2 N^2} \sum_{\ell_1=0}^{N-1} \sum_{\ell_2=0}^{N-1} \lambda_{\ell_1}^\alpha \lambda_{\ell_2}^\alpha  \left[ {}_2F_1(-\alpha,-\alpha;1;\rho_{\ell_1\ell_2}^2) -1 \right] \label{inequal_01} \\
\leq & \frac{[\Gamma(\alpha+1)]^2}{\alpha^2 N^2} \sum_{\ell_1=0}^{N-1} \sum_{\ell_2=0}^{N-1} \lambda_{\ell_1}^\alpha \lambda_{\ell_2}^\alpha  \left[ {}_2F_1(-\alpha,-\alpha;1;1) -1 \right] \label{inequal_02} \\
= & \frac{C_1}{\alpha^2 N^2} \sum_{\ell_1=0}^{N-1} \lambda_{\ell_1}^\alpha \sum_{\ell_2=0}^{N-1} \lambda_{\ell_2}^\alpha \to \frac{C_1}{\alpha^2} \left( \int_{\Tbb} \Phi(\theta)^\alpha \frac{\d\theta}{2\pi} \right)^2 = \mathrm{const.} \label{equal_03}
\end{align}
\end{subequations}
as $N\to\infty$ where,
\begin{itemize}
	\item \eqref{inequal_01} holds because using the series expansion \eqref{hypergeom}, we see that the quantity
	\begin{equation*}
	{}_2F_1(-\alpha,-\alpha;1;\rho_{\ell_1\ell_2}^2) -1 = \sum_{n=1}^\infty \left[ \frac{(-\alpha)_n}{n!}\right]^2 \rho_{\ell_1\ell_2}^{2n}
	\end{equation*}
	is positive;
	\item \eqref{inequal_02} follows from the relation $0\leq \rho_{\ell_1\ell_2}^2 \leq 1$ for the correlation coefficient;
	\item \eqref{equal_03} results from an application of Gauss's summation theorem \cite[Eq.~15.1.20]{abramowitz1972handbook}: 
	\begin{equation}\label{Gaussthm}
	_2F_1(-\alpha,-\alpha; 1; 1) = \frac{\Gamma(2\alpha+1)}{[\Gamma(\alpha+1)]^2}
	\end{equation}
	which holds since we have restricted ourselves to the case $0<\alpha<1$,
	and $C_1$ is the same constant as the one in \eqref{calc_var_step5}.
\end{itemize}


As revealed by \eqref{equal_03}, the upper bound for $\hat{m}_k$ is a constant that does not improve as $N$ increases. 
Although such a bound may not be tight, consistency of the generalized cepstral estimator \eqref{est_cep_gen} cannot be directly derived in this case.
There are two ways to resolve this issue. 
One way is to appeal to a \emph{windowed periodogram} $\check{\Phi}$ instead of the unwindowed version $\hat{\Phi}$, as hinted at the end of \cite[Sec.~IV]{ephraim2005second}.
More precisely, one first introduces a suitable window function $w: \Zbb\to \Rbb$, see \cite[Subsec.~6.2.3]{priestley1981spectral}, and then modifies the correlogram \eqref{Phi_hat_correlo} as
\begin{equation}
\check{\Phi}(\theta) = \sum_{k=-N+1}^{N-1} w(k) \hat{c}_k e^{-ik\theta}.
\end{equation}
A modified estimator $\check{m}_k$ results if one replaces $\hat{\Phi}$ in \eqref{est_cep_gen} with the above $\check{\Phi}$.
Such a practice is intuitively reasonable since under quite general conditions, $\check{\Phi}$ is a consistent estimator of the true spectrum $\Phi$ \cite[Subsec.~6.2.4]{priestley1981spectral}, and one may expect that the conclusion of Theorem~\ref{thm_gen_cep_est} holds for $\check{m}_k$. A rigorous proof seems hard (if possible) as for example in the computation of \eqref{expect_hatPhi_power}, now $\check{\Phi}_\ell$ is a linear combination of $\hat{\Phi}_\ell$ and has a \emph{generalized chi-squared} distribution \cite{davies1980algorithm}. Unfortunately, the latter distribution in general does not admit a closed-form expression for the probability density function.

	
The other way is to make an additional assumption about the underlying random process $y_t$ such as the following one.

\begin{assumption}\label{assump_decay_correl}
	The variances of the normalized spectral components $\{\bar{Y}_\ell\}_{\ell=0}^{N-1}$ are asymptotically bounded, that is, the quantity
	\begin{equation}\label{max_lambda_l}
	\gamma_N:=\max_{0\leq \ell\leq N-1}\lambda_{\ell}
	\end{equation}
	satisfies $\lim_{N\rightarrow\infty }\gamma_N<\infty$.
	Moreover, the growth of the correlation coefficients is under control in the sense that
	\begin{equation}\label{cond1}
	\sum_{\ell_2=0}^{N-1}\rho_{\ell_1\ell_2}^{2}\leq f(N), \quad\forall \ell_1 \in \{0, 1, \dots, N-1\}
	\end{equation}	
	where $f(N)$ is a function such that
	\begin{equation}\label{growth_f_N}
	\lim_{N\rightarrow \infty} f(N)/N =0.
	\end{equation}
\end{assumption}

{In plain words, condition (\ref{cond1}) requires that the correlation between the spectral components is mild.} We then have the next result.

\begin{proposition}
	Under Assumptions~\ref{assump_independ} $($without the independence between spectral components$)$ and \ref{assump_decay_correl}, the generalized cepstral estimator \eqref{est_cep_gen} is mean-square consistent up to a constant multiplicative factor, that is, \eqref{ms_converg_hat_mk}-\eqref{ms_converg_hat_m0} hold.
\end{proposition}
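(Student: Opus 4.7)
The plan is to follow exactly the strategy of Theorem~\ref{thm_gen_cep_est}, namely to check that the rescaled estimator has the correct mean and that its variance vanishes. Since the computations for the mean in \eqref{expect_hat_mk} and \eqref{expect_hat_m0} never invoked independence of the spectral components, they remain valid here, so the task reduces to showing $\var(\hat m_k)\to 0$ as $N\to\infty$. Equivalently, I need to improve on the bound \eqref{seq_inequal}, whose looseness came precisely from replacing $\rho_{\ell_1\ell_2}^2$ by $1$ inside the hypergeometric series; now Assumption~\ref{assump_decay_correl} will let me exploit smallness of the correlations.

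The main analytic step is to keep the dependence on $\rho_{\ell_1\ell_2}^2$ explicit. Since $0\leq \rho_{\ell_1\ell_2}^2\leq 1$, for every $n\geq 1$ one has $\rho_{\ell_1\ell_2}^{2n}\leq \rho_{\ell_1\ell_2}^2$; combining this term-by-term majorization of the series expansion \eqref{hypergeom} with Gauss's theorem \eqref{Gaussthm} yields
\begin{equation*}
{}_2F_1(-\alpha,-\alpha;1;\rho_{\ell_1\ell_2}^2)-1 \,\leq\, \rho_{\ell_1\ell_2}^2 \left[\frac{\Gamma(2\alpha+1)}{[\Gamma(\alpha+1)]^2}-1\right].
\end{equation*}
Plugging this into the middle expression of \eqref{var_hat_m}, using the triangle inequality on the phase factor $e^{ik\frac{2\pi}{N}(\ell_1-\ell_2)}$, and bounding $\lambda_\ell^\alpha \leq \gamma_N^\alpha$ from \eqref{max_lambda_l} gives
\begin{equation*}
\var(\hat m_k) \,\leq\, \frac{C_1\, \gamma_N^{2\alpha}}{\alpha^2 N^2} \sum_{\ell_1=0}^{N-1}\sum_{\ell_2=0}^{N-1} \rho_{\ell_1\ell_2}^2,
\end{equation*}
with the same constant $C_1=\Gamma(2\alpha+1)-[\Gamma(\alpha+1)]^2$ as in \eqref{calc_var_step5}.

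To finish, I would apply the correlation control \eqref{cond1}, which bounds the inner sum over $\ell_2$ by $f(N)$ uniformly in $\ell_1$, so the double sum is at most $N f(N)$. Combining with the asymptotic boundedness of $\gamma_N$ and $f(N)/N\to 0$ from \eqref{growth_f_N} then yields
\begin{equation*}
\var(\hat m_k)\,\leq\, \frac{C_1\, \gamma_N^{2\alpha}}{\alpha^2}\cdot \frac{f(N)}{N}\,\to\, 0
\end{equation*}
as $N\to\infty$, and the case $k=0$ is identical since the complex exponential is simply replaced by $1$. The place where I expect the proof to be most delicate is the bound on ${}_2F_1(-\alpha,-\alpha;1;\rho^2)-1$: a uniform bound (like $\rho^2 \leq 1$) was not enough in \eqref{equal_03}, and one must extract at least one power of $\rho^2$ so that Assumption~\ref{assump_decay_correl} can actually be invoked. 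The elementary tail-of-series argument above is exactly the minimal such extraction, and the constant it produces matches $C_1$ so the final expression is clean.
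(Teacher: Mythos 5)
Your proposal is correct and follows essentially the same route as the paper: both proofs extract a factor of $\rho_{\ell_1\ell_2}^{2}$ from the tail of the hypergeometric series (using $\rho^{2n}\leq\rho^{2}$ for $n\geq 1$ together with Gauss's theorem \eqref{Gaussthm}), then invoke \eqref{cond1} and the boundedness of $\gamma_N$ to obtain the bound $\mathrm{const.}\times f(N)/N\to 0$; the paper merely performs the interchange of the sums over $n$ and over $(\ell_1,\ell_2)$ in the opposite order, arriving at the identical constant $C_1$.
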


\begin{proof}
Apparently, the calculations in \eqref{var_hat_m} and \eqref{seq_inequal} are all valid. We shall modify the inequalities after \eqref{inequal_01} and show that the power of Assumption~\ref{assump_decay_correl} can lead to the result $\var(\hat{m}_k)\to 0$ as $N\to \infty$.
	
Given the fact $0\leq \rho_{\ell_1\ell_2}^2\leq1$ and the condition \eqref{cond1}, we have that
$$\sum_{\ell_2=0}^{N-1}\rho_{\ell_1\ell_2}^{2n}\leq f(N),\quad \forall\ell_1\in \{0, \dots, N-1\} \ \text{ and }\ \forall n\geq 1.$$
Next we expand \eqref{inequal_01} using the series \eqref{hypergeom}:
\begin{subequations}
\begin{align}
\var(\hat m_k)
& \leq \frac{[\Gamma(\alpha+1)]^2}{\alpha^2 N^2} \sum_{\ell_1=0}^{N-1} \sum_{\ell_2=0}^{N-1} \lambda_{\ell_1}^\alpha \lambda_{\ell_2}^\alpha  \sum_{n=1}^\infty \left[ \frac{(-\alpha)_n}{n!}\right]^2 \rho_{\ell_1\ell_2}^{2n} \nn \\
& \leq\frac{[\Gamma(\alpha+1)]^2 \gamma_N^{2\alpha}}{\alpha^2 N^2} \sum_{n=1}^\infty \left[ \frac{(-\alpha)_n}{n!}\right]^2 \, \sum_{\ell_1=0}^{N-1} \sum_{\ell_2=0}^{N-1} \rho_{\ell_1\ell_2}^{2n}  \label{inequal_03} \\
& \leq \frac{[\Gamma(\alpha+1)]^2 M^{2\alpha}}{\alpha^2} \left[ _2F_1(-\alpha,-\alpha;1;1) -1 \right] \frac{f(N)}{N} \label{inequal_04} \\
& = \mathrm{const.} \times \frac{f(N)}{N} \to 0 \ \text{ by }\ \eqref{growth_f_N}, \nn
\end{align} 
\end{subequations}
where,
\begin{itemize}
	\item \eqref{inequal_03} follows from \eqref{max_lambda_l} and interchanging the order of summation,
	\item and in \eqref{inequal_04} the constant $M$ is an upper bound for $\gamma_N$ (which exists by Assumption~\ref{assump_decay_correl}) and we have utilized \eqref{cond1}.
\end{itemize}
%
%
\end{proof}


	

\begin{example}	
Although the condition in (\ref{cond1}) is not easy to check in practice, 
it seems to be a reasonable requirement for stationary stochastic processes generated by a {stable rational shaping} filter driven by white noise. Indeed, consider the ARMA (autoregressive moving-average) process $y_t=W(z)e_t$ where $e_t$ is {a} normalized white Gaussian noise, the shaping filter is defined as  
$$W(z)=\frac{z^2-z+0.8}{z^2-1.6z+0.81}=\sum_{t=0}^\infty w_t z^{-t},$$ 
and $w_t$ denotes the impulse response of the shaping filter. Then, it is not difficult to see that 
\begin{align*} \E &[\bar Y_{\ell_1}\bar Y_{\ell_2}^*]\nn\\ & =\frac{1}{N}\sum_{t_1, t_2=0}^{N-1} \sum_{s=0}^{\min\{t_1,t_2\}}e^{-i\frac{2\pi}{N} (\ell_1 t_1-\ell_2 t_2)} (w_{t_1-s}w_{t_2-s}).\end{align*}
Based on the computation with this formula, Figure \ref{fig_condtion} shows the value of {the sum $\sum_{\ell_2=0}^{N-1}\rho_{\ell_1\ell_2}^{2}$}
as a function of $N$ with $\ell_1=3N/4$ (i.e., it corresponds to the angular frequency $\theta_1=3\pi/2$). A similar behavior has been found for different values of $\ell_1$. This experiment suggests that the sum {of the modulus squared correlation coefficients} does not grow as $N$ approaches infinity and thus conditions \eqref{cond1}-\eqref{growth_f_N} are satisfied.

\begin{figure}
	\includegraphics[width=0.5\textwidth]{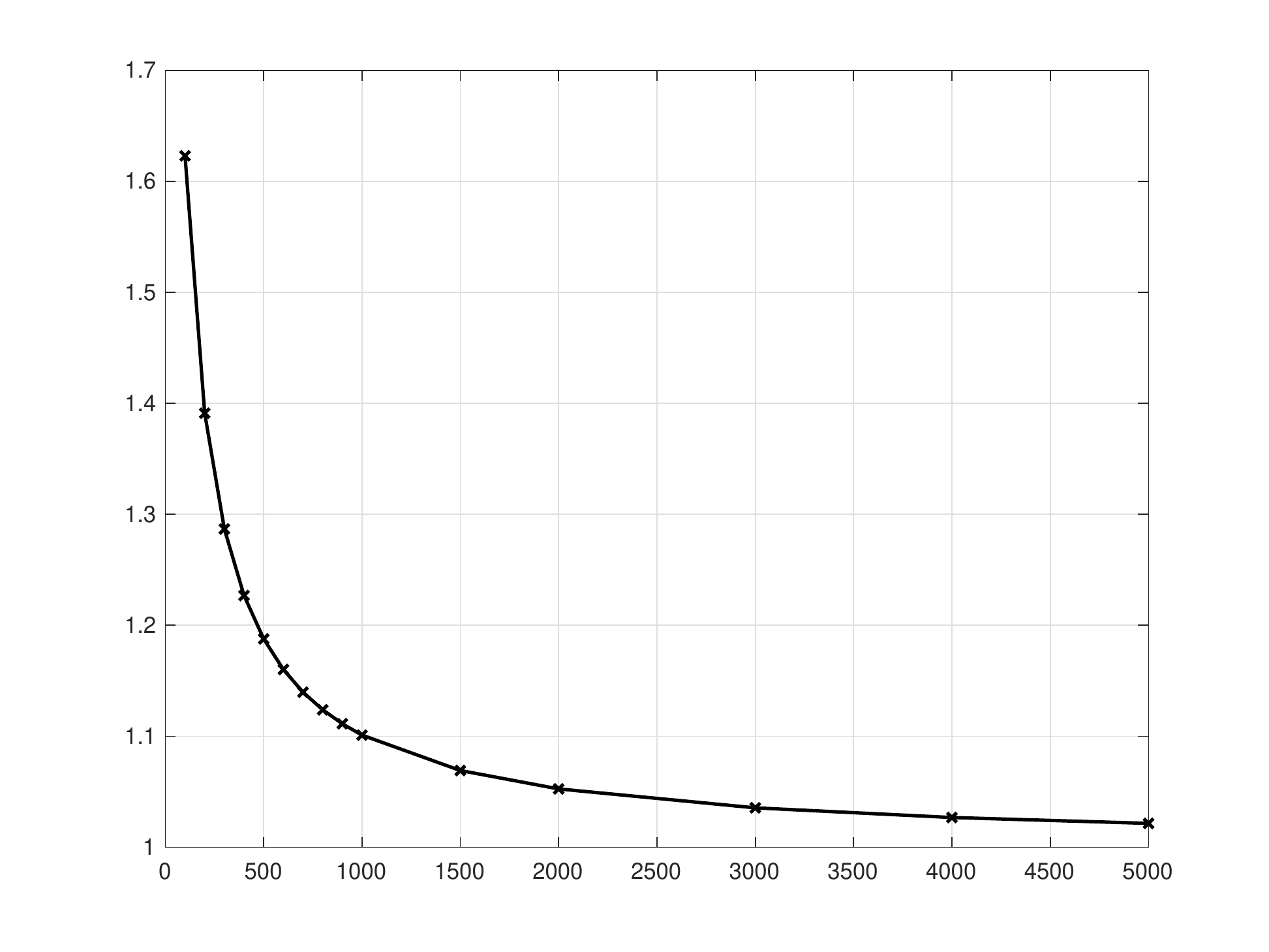}
	\caption{The partial sum of the correlation coefficients in \eqref{cond1} with $\ell_1=3N/4$ as a function of $N$. The crosses {represent} the values of the partial sum that have been computed.}\label{fig_condtion}
\end{figure}

\end{example} 
 


\section{Extension to Multidimensional Random Fields}\label{sec:multidim}

We want to point out that the results obtained in Secs.~\ref{sec:independ} and \ref{sec:depend} transit smoothly to the multidimensional case. In order to see this, let $y_{\tb}$ be a zero-mean stationary $d$-dimensional complex-valued random field where $\tb=(t_1,\dots,t_d)\in\Zbb^d$. Given a vector $\Nb=(N_1,\dots,N_d)\in\Nbb_+^d$, define a finite index set
\begin{equation}
\Zbb^d_{\Nb} := \left\{ (t_1,\dots,t_d) \,:\, 0\leq t_j\leq N_j-1,\ j=1,\dots,d \right\},
\end{equation}
which is just the integer grid of a $d$-dimensional box with a cardinality $|\Nb|:=\prod_{j=1}^{d}N_j$.
Suppose that we have obtained samples of the field $\{y_{\tb} : \tb\in\Zbb^d_{\Nb}\}$. The spectral domain now becomes $\Tbb^d$ containing frequency vectors $\thetab=(\theta_1,\dots,\theta_d)$.
For the computation of the DFT, the multidimensional counterpart of \eqref{T_grid} for the discretization of $\Tbb^d$ is given as
\begin{equation}\label{Td_grid}
\Tbb^d_\Nb:=\left\{ \left( \frac{2\pi}{N_1}\ell_1,\dots,\frac{2\pi}{N_d}\ell_d \right) \in \Tbb^d : \lb \in\Zbb_\Nb^d \right\}.
\end{equation}
Obviously, the map $\lb\mapsto\thetab_{\lb} := ({2\pi\ell_1}/{N_1},\dots,{2\pi\ell_d}/{N_d}) $ is a bijection from $\Zbb_\Nb^d$ to $\Tbb_\Nb^d$.
The spectral components are given by the multidimensional DFT
\begin{equation}
Y_{\lb} := Y(\thetab_{\lb}) = \sum_{t\in\Zbb_\Nb^d} y_{\tb} \, e^{-i\innerprod{\tb}{\thetab_{\lb}}},\quad \thetab_{\lb}\in\Tbb_\Nb^d
\end{equation}
whose normalized version is $\bar{Y}_{\lb} := \frac{1}{\sqrt{|\Nb|}}Y_{\lb}$. The periodogram is still given by $\hat\Phi_{\lb}=|\bar{Y}_{\lb}|^2$.
The multidimensional version of Theorem~\ref{thm_gen_cep_est} is summarized below.

\begin{proposition}\label{prop_gen_cep_est_multidim}
	Assume that the normalized spectral components $\{\bar Y_{\lb} : \lb\in\Zbb_\Nb^d\}$ of the random field $\{y_{\tb} : \tb\in\Zbb_\Nb^d\}$ are independent zero-mean complex (circular) Gaussian random variables such that $\var \bar{Y}_{\lb} = \E\, \hat{\Phi}_{\lb} = \lambda_{\lb}>0$. Assume in addition that the underlying true spectral density $\Phi(\thetab)$ of the field $y_{\tb}$ satisfies some regularity properties as discussed in Remarks~\ref{rem_regular_Phi} and \ref{rem_Lp_Phi}.
    Then, the estimator 
    \begin{equation}\label{est_cep_gen_multidim}
    \hat{m}_{\kb} = \begin{cases}
    \frac{1}{\alpha |\Nb|} \sum_{\lb\in\Zbb_\Nb^d} e^{i \innerprod{\kb}{\thetab_{\lb}}} \hat{\Phi}_{\lb}^\alpha & \text{if } \kb\neq \zerob, \\
    \\
    \frac{1}{\alpha} \left( \frac{1}{|\Nb|} \sum_{\lb\in\Zbb_\Nb^d} \hat{\Phi}_{\lb}^\alpha -1 \right) & \text{if } \kb=\zerob
    \end{cases}
    \end{equation}
    of the generalized cepstral coefficients 
    \begin{equation}\label{def_cep_gen_multidim}
    m_{\kb} = \begin{cases}
    \frac{1}{\alpha} \int_{\Tbb^d} e^{i \innerprod{\kb}{\thetab}} \Phi(\thetab)^\alpha \d\mu(\thetab) & \text{if } \kb\neq \zerob, \\
    \\
    \frac{1}{\alpha} \left[ \int_{\Tbb^d} \Phi(\thetab)^\alpha \d\mu(\thetab) -1 \right] & \text{if } \kb=\zerob
    \end{cases}
    \end{equation}
    is consistent up to a constant multiplicative factor.
    Here $\d\mu(\thetab)=\frac{1}{(2\pi)^d}\prod_{j=1}^{d} \d\theta_j$
    is the normalized Lebesgue measure on $\Tbb^d$.
    More precisely, we have for $\kb\neq \zerob$ the convergence
	\begin{equation}\label{ms_converg_hat_mk_multidim}
	C \hat{m}_{\kb} \xrightarrow{\text{m.s.}} m_{\kb}, \ \text{ and } \ 
	C\hat{m}_{\zerob} + {(C-1)}/{\alpha} \xrightarrow{\text{m.s.}} m_{\zerob}
	\end{equation}
	as $\min(\Nb)\to\infty$, where the constant $C = 1/\Gamma(\alpha+1)$.
\end{proposition}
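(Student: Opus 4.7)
The plan is to adapt the proof of Theorem~\ref{thm_gen_cep_est} to the multi-index setting: I will again split the claim into convergence of the mean and vanishing of the variance, which together imply mean-square convergence by the standard characterization. All arguments in Section~\ref{sec:independ} are essentially index-agnostic, so the scalar index $\ell\in\{0,\dots,N-1\}$ will simply be replaced by a multi-index $\lb\in\Zbb_{\Nb}^d$, the normalization $N$ by the cardinality $|\Nb|$, and the scalar Riemann sums will be promoted to $d$-dimensional Riemann sums on the grid $\Tbb_{\Nb}^d$ defined in \eqref{Td_grid}.

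For the expectation, the crucial observation is that the marginal distribution of each $\hat{\Phi}_{\lb}=|\bar Y_{\lb}|^2$ is still exponential with mean $\lambda_{\lb}$ (nothing in the derivation of \eqref{expect_hatPhi_power} uses the one-dimensional nature of $\ell$), hence $\E(\hat{\Phi}_{\lb}^\alpha)=\lambda_{\lb}^\alpha\,\Gamma(\alpha+1)$. Plugging this into \eqref{est_cep_gen_multidim}, I obtain for $\kb\neq\zerob$
\begin{equation*}
\E\,\hat{m}_{\kb} \;=\; \frac{\Gamma(\alpha+1)}{\alpha\,|\Nb|}\sum_{\lb\in\Zbb_{\Nb}^d} e^{i\innerprod{\kb}{\thetab_{\lb}}}\,\lambda_{\lb}^{\alpha},
\end{equation*}
in which the right-hand side is a normalized $d$-dimensional Riemann sum on $\Tbb_\Nb^d$ for the integral appearing in \eqref{def_cep_gen_multidim}. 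Combined with $\lambda_{\lb}\to\Phi(\thetab_\lb)$, this gives $\E(C\hat{m}_{\kb})\to m_{\kb}$ with $C=1/\Gamma(\alpha+1)$. The case $\kb=\zerob$ is handled identically, with the constant shift by $1/\alpha$ baked into the definition of $\hat{m}_{\zerob}$ and $m_{\zerob}$.

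For the variance, independence of the spectral components kills all off-diagonal terms exactly as in \eqref{calc_var_step4}, and using \eqref{var_hatPhi_l} (again only a marginal computation) I get
\begin{equation*}
\var(\hat{m}_{\kb}) \;=\; \frac{C_1}{\alpha^2\,|\Nb|^2}\sum_{\lb\in\Zbb_{\Nb}^d}\lambda_{\lb}^{2\alpha} \;=\; \frac{C_1}{\alpha^2\,|\Nb|}\cdot\frac{1}{|\Nb|}\sum_{\lb\in\Zbb_{\Nb}^d}\lambda_{\lb}^{2\alpha},
\end{equation*}
with $C_1=\Gamma(2\alpha+1)-[\Gamma(\alpha+1)]^2$. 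The trailing average is a Riemann sum converging to $\int_{\Tbb^d}\Phi(\thetab)^{2\alpha}\,\d\mu(\thetab)$, which is finite under the integrability hypothesis $\Phi\in L^{2\alpha}(\Tbb^d)$ from Remark~\ref{rem_Lp_Phi}, so the whole expression vanishes at rate $1/|\Nb|\to 0$ as $\min(\Nb)\to\infty$. Combining with the mean computation finishes the proof.

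I do not expect a genuinely new obstacle: the multidimensional extension is almost entirely notational. The only delicate point is the joint justification of the two limits $\lambda_{\lb}\to\Phi(\thetab_{\lb})$ and ``Riemann sum $\to$ integral'' on the $d$-dimensional grid; this is precisely the reason for invoking the regularity properties referenced in the statement (e.g.\ a multidimensional analogue of the Wiener-class condition $\sum_{\kb\in\Zbb^d}|c_{\kb}|<\infty$ in Remark~\ref{rem_regular_Phi}), which guarantees continuity of $\Phi$ and uniform convergence of the finite Fourier series to it. Once this regularity is invoked, the convergence of the Riemann sums is automatic.
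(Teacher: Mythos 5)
Your proposal is correct and follows exactly the route the paper intends: the paper explicitly omits the proof of this proposition on the grounds that it is "almost identical" to the unidimensional case, with the only changes being multi-indices in the summations, the normalization $|\Nb|$ in place of $N$, and multivariable Riemann sums converging to integrals over $\Tbb^d$. Your mean and variance computations reproduce \eqref{expect_hatPhi_power}, \eqref{var_hatPhi_l}, and \eqref{calc_var_step5} faithfully in the multi-index setting, and your remark on the regularity needed for $\lambda_{\lb}\to\Phi(\thetab_{\lb})$ and the Riemann-sum convergence matches the role of Remarks~\ref{rem_regular_Phi} and \ref{rem_Lp_Phi} in the statement.
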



	
Notice that the proof of the above proposition is almost identical to that in the unidimensional case (hence omitted), as the computation of the quantities such as \eqref{expect_hatPhi_power} and \eqref{var_hatPhi_l} is left unchanged.
The only difference in the multidimensional case is that we need to use multiple indices for the summation and write multivariable integrals. For example, \eqref{converg_var_hat_mk} should be rewritten as
\begin{equation}
\frac{1}{|\Nb|} \sum_{\lb\in\Zbb_\Nb^d} \lambda_{\lb}^{2\alpha} \to \int_{\Tbb^d} \Phi(\thetab)^{2\alpha} \d\mu(\thetab).
\end{equation}

As for the case with correlated spectral components, we can either use windowing techniques as explained in \cite[Sec.~9.7]{priestley1981spectral} for the 2-d case and in \cite{engels2017advances,ZFKZ2019fusion} for 3-d radar signal processing, or make an argument along the lines of Section~\ref{sec:depend} where conditions in Assumption~\ref{assump_decay_correl} are now replaced by  
\begin{align}
& \lim_{\min(\Nb)\to \infty} \gamma_{\Nb} <\infty, \quad
\sum_{\lb_2\in\Zbb_\Nb^d}\rho_{\lb_1\lb_2}^{2}\leq f(\Nb), \quad\forall \lb_1\in\Zbb_\Nb^d,\nn
\end{align}
where 
$$  \gamma_{\Nb}:=\max_{\lb\in \Zbb_\Nb^d } \,\lambda_{\lb}, \quad
\lim_{\min(\Nb)\to \infty} f(\Nb)/|\Nb|=0.$$


\section{Cascade System Identification}\label{sec:sysid}

In this section, we show that our generalized cepstral estimator can be integrated into a spectral estimation procedure for the identification of cascade (possibly multidimensional) linear stochastic systems with \emph{identical} subsystems, see Fig.~\ref{fig:cascade_linear_system} where $W(z)$ denotes the transfer function of each subsystem, $y_t$ is the output, and $e_t$ is the white noise input.
In practice, of course the system need not be driven by a white noise but rather has an input $u_t$ which can be modeled as a stationary stochastic process. The output of the cascade system then becomes (symbolically) $y_t=[W(z)]^2 u_t$. Assume that we are able to measure both $y_t$ and $u_t$.
At that point one can estimate the power spectral density of the input and thus also its shaping filter $\hat W_u$. After that, we can filter the output to obtain $\tilde y_t=[\hat W_u(z)]^{-1} y_t$ which approximately admits the model of Fig.~\ref{fig:cascade_linear_system}. In this sense we have \emph{whitened} the input process. Therefore, it is not restrictive to consider a white noise input in the cascade system above.

A cascade of identical systems is {commonly used} to model reactors in chemical engineering: each subsystem corresponds to a continuous stirred-tank reactor (CSTR) and the aim of the cascade is to increase the residence time distribution (RTD) of the reactor \cite{renken2014microstructured}. Moreover, the multidimensional setting allows to model the fact that influent and effluent concentration distributions are non-uniform, see for instance \cite{vavilin2007anaerobic}.

{We assume that} the number of subsystems which we call $\nu$, is \emph{a priori known} so that the parameter $\alpha$ is specialized as $1-\frac{1}{\nu}$ (see also Example~\ref{ex_alpha_nu} in Section~\ref{sec:independ}). The latter choice of the parameter turns out to correctly recover the number of subsystems. At the heart of our system identification approach lies a spectral estimation problem presented in \cite{Zhu-Zorzi-2021-cepstral,Zhu-Zorzi-21} which we shall briefly recall next.

\begin{figure}[!t]
	\centering
	\tikzstyle{int}=[draw, minimum size=2em]
	\tikzstyle{init} = [pin edge={to-,thin,black}]
	\begin{tikzpicture}[node distance=2cm,auto,>=latex']
	\node [int] (a) {$W(z)$};
	\node (b) [left of=a, coordinate] {};
	\node [int] (c) [right of=a] {$W(z)$};
	\node (d) [right of=c] {};
	\path[->] (b) edge node {$e_{t}$} (a);
	\path[->] (c) edge node {$y_{t}$} (d);
	\draw[->] (a) edge node {} (c) ;
	\end{tikzpicture}
	\caption{A cascade linear stochastic system with two identical subsystems.}
	\label{fig:cascade_linear_system}
\end{figure}
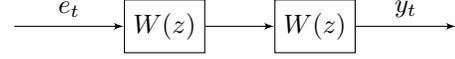


Suppose that from some underlying random field we are given a number of covariances $\{c_{\kb} : \kb\in\Lambda\}$ and generalized cepstral coefficients $\{m_{\kb} : \kb\in\Lambda_0\}$, and from these numbers we want to reconstruct the spectrum $\Phi(\thetab)$, a nonnegative function in $L^1(\Tbb^d)$. Here the index set $\Lambda$ can be any finite set that contains the zero vector and is symmetric with respect to the origin, namely $\kb\in \Lambda \implies -\kb\in\Lambda$. The other index set $\Lambda_0 := \Lambda\backslash\{\zerob\}$ so $m_{\zerob}$ is excluded for technical reasons. Inspired by the famous \emph{Maximum Entropy} method \cite{burg1975maximum}, we set up the following general optimization problem:
\begin{subequations}\label{primal_pb}
	\begin{alignat}{2}
	& \underset{\Phi\geq 0}{\max}
	& \quad & \Hbb_\nu(\Phi) := \frac{\nu^2}{\nu-1} \left(\int_{\Tbb^d}\Phi(\thetab)^{\frac{\nu-1}{\nu}} \d\mu(\thetab) -1 \right) \label{nu_entropy} \\
	&\hspace{0.2cm} \text{s.t.}
	& \quad & c_\kb=\int_{\Tbb^d}e^{i\innerprod{\kb}{\thetab}}\Phi(\thetab) \d\mu(\thetab) \;\ \forall \kb\in\Lambda, \label{constraint_cov} \\
	& 
	& \quad & m_\kb= \frac{\nu}{\nu-1}\int_{\Tbb^d}e^{i\innerprod{\kb}{\thetab}}\Phi(\thetab)^{\frac{\nu-1}{\nu}} \d\mu(\thetab) \;\ \forall \kb\in\Lambda_0, \label{constraint_cep}
	\end{alignat}
\end{subequations}
where the objective functional $\Hbb_\nu(\Phi)$ is a generalized version of the entropy of $\Phi$ derived from the $\alpha$-divergence \cite{Z-14rat,zorzi2015interpretation}. The above formulation can be viewed as a considerable generalization of the simultaneous covariance-cepstral extension theory in \cite{enqvist2004aconvex,RKL-16multidimensional} as their problem can be recovered from ours by letting $\nu\to 1$. The optimization problem in \eqref{primal_pb} is infinite-dimensional, and it is usually more convenient to work with its finite-dimensional dual problem. Following the analysis in \cite{Zhu-Zorzi-2021-cepstral}, a suitable form of \emph{regularization} is needed in order to promote a unique positive solution to the dual problem, so that the resulting primal solution $\Phi$ is a strictly positive spectral density which is desired in many practical applications. More specifically, the regularized dual optimization problem is formulated as:
\begin{align}\label{reg_dual_pb}
& \underset{\pb,\, \qb}{\min\,}\ J_{\nu,\lambda}(\pb,\qb) := J_{\nu}(\pb,\qb)+\frac{\lambda}{\nu-1} \int_{\Tbb^d}\frac{1}{P(\thetab)^{\nu-1}} \d\mu(\thetab)  \nn\\
& \hbox{ s.t. } P(\thetab) \geq 0 , \quad Q(\thetab)\geq 0, \quad \forall \thetab\in\Tbb^d
\end{align}
where, the dual variables (the Lagrange multipliers) $\{q_{\kb} : \kb\in\Lambda\}$ and $\{p_{\kb} : \kb\in\Lambda_0\}$ correspond to the covariance and cepstral constraints \eqref{constraint_cov} and \eqref{constraint_cep}, respectively, 
$P(\thetab):= 1 + \sum_{\kb\in\Lambda_0} p_\kb e^{-i\innerprod{\kb}{\thetab}}$ and $Q(\thetab):=\sum_{\kb\in\Lambda} q_\kb e^{-i\innerprod{\kb}{\thetab}}$ are trigonometric polynomials, and $\lambda>0$ is the regularization parameter. The first term in the dual objective function $J_{\nu,\lambda}(\pb,\qb)$ is the unregularized dual function
\begin{equation}\label{dual_func}
J_{\nu}(\pb,\qb) := \frac{1}{\nu-1}\int_{\Tbb^d}  \frac{P(\thetab)^\nu}{Q(\thetab)^{\nu-1}} \d \mu(\thetab) +\langle \qb, \cb\rangle  -\langle \pb,\mb\rangle,
\end{equation}
where the real-valued inner product $\langle \qb, \cb\rangle := \sum_{\kb\in\Lambda} q_{\kb} c_{\kb}^*$.

Under a suitable feasibility assumption for the covariance (multi)sequence $\{c_{\kb} : \kb\in\Lambda\}$ (see \cite[Assumption~5.1]{Zhu-Zorzi-2021-cepstral}), if the integer parameter $\nu$, the number of subsystems in the cascade, satisfies the condition $\nu\geq \frac{d}{2}+1$, then the regularized dual problem \eqref{reg_dual_pb} admits a unique solution $(\hat{\pb},\hat{\qb})$ such that the corresponding polynomials $\hat P$ and $\hat Q$ are both strictly positive.
Moreover, the positive rational function 
\begin{equation}\label{Phi_optimal}
\hat{\Phi}_\nu(\thetab) = [\hat{P}(\thetab)/\hat{Q}(\thetab)]^\nu
\end{equation}
matches the given covariances $\{c_\kb : \kb \in\Lambda\}$ exactly and the generalized cepstral coefficients $\{m_\kb : \kb \in\Lambda_{0}\}$ approximately. Clearly we have set $p_{\zerob}=1$ in order to avoid trivial cancellations between $P$ and $Q$ in the rational function $\Phi_{\nu}$.

From the perspective of numerical computation, it is better to \emph{discretize} the problems \eqref{reg_dual_pb} on a grid similar to \eqref{Td_grid} but of a different size $\Kb=(K_1,\dots,K_d)\in\Nbb_+^d$. The formulas will be similar to the ones above and are omitted here. For details see \cite[Section 8]{Zhu-Zorzi-2021-cepstral} where it is also explained that the solution to the discrete problem is a reasonable approximation of the solution to the corresponding continuous problem when the grid size $\Kb$ is sufficiently large (componentwise).

To summarize, our system identification procedure can be divided into four steps:

\noindent\textbf{(i)} Feed the system with normalized white noise $e_{\tb}$ and collect the output random field $\{y_{\tb} : \tb\in\Zbb_\Nb^d\}$;\\
\textbf{(ii)} Choose an index set $\Lambda$ and estimate $\{c_{\kb} : \kb\in\Lambda\}$ via \eqref{cov_estimates} adapted to multidimensional random fields (see e.g., \cite{ZFKZ2019M2}) and $\{m_{\kb} : \kb\in\Lambda_0\}$ via \eqref{est_cep_gen_multidim} corrected with the constant $C$ in \eqref{ms_converg_hat_mk_multidim};\\ 
\textbf{(iii)} Fix a grid size $\Kb$ and solve the discrete version of the regularized dual optimization problem \eqref{reg_dual_pb} given {$c_{\kb}$'s and $m_{\kb}$'s} in Step (ii) and a regularization parameter {$\lambda>0$};\\
\textbf{(iv)} Factor the optimal spectrum $\hat\Phi(\thetab)$ in order to obtain a transfer function $\hat{W}(\zb)$ where $\zb:=(z_1,\dots,z_d)$.

We must point out that there is a significant technical difficulty in the last step of the procedure concerning \emph{spectral factorization}.
Since our optimal spectrum is a \emph{rational} function, we end up factoring positive Laurent trigonometric polynomials of several variables into \emph{one square}, which is in general an impossible task.\footnote{It has been shown that \emph{sum-of-squares} factorization is always possible for multivariate Laurent polynomials that are strictly positive on the multi-torus \cite{dritschel2004factorization}. However, the factors in general have degrees larger than the original Laurent polynomial.} Fortunately in the $2$-d case, \cite[Theorems 1.1.1 \& 1.1.3]{geronimo2004positive} have given a sufficient and necessary condition to check such factorability and an explicit formula to compute the factor when the factorization is possible. The nontrivial part of the condition states that a certain submatrix of the full covariance matrix should have a specific low rank. 


In addition, we want to stress that the regularization employed in \eqref{reg_dual_pb} has the power of promoting strict convexity\footnote{The unregularized dual function \eqref{dual_func} is only convex but not strictly convex \cite{Zhu-Zorzi-2021-cepstral}.} and well-posedness of the dual problem \cite{Zhu-Zorzi-2021-cepstral}. Of course, the regularization term does not appear naturally in the system identification problem. Thus in order to get a reasonable consistency result for the identified model, we should take the regularization parameter $\lambda$ small as it directly controls the error in generalized cepstral matching. The next proposition expresses the idea.

\begin{proposition}\label{prop_identifiability}
	Assume that the model class is correctly specified, i.e., the unknown system transfer function $W(\zb)$ is such that true spectrum of the output random field $y_{\tb}$ $($when the input $e_{\tb}$ is a normalized white noise$)$ falls within the class of rational functions having the form \eqref{Phi_optimal}. Then as the number of samples $\min(\Nb)\to\infty$ and the regularization parameter $\lambda\to 0$, the optimal solution to \eqref{reg_dual_pb} recovers the true spectrum of $y_{\tb}$ as well as the correct system parameters in $W(\zb)$ after spectral factorization.	
\end{proposition}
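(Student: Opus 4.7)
The plan is to chain three consistency facts and then appeal to spectral factorization, organizing the argument around the four steps of the identification procedure described above.

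First, I would establish consistency of the data fed into Step (iii). Since the input $e_{\tb}$ is white Gaussian and the system is linear, the output field $y_{\tb}$ is Gaussian. The sample covariances are known to be mean-square consistent, so $\hat c_{\kb}\to c_{\kb}$ for every $\kb\in\Lambda$. For the generalized cepstral coefficients, I would invoke Proposition \ref{prop_gen_cep_est_multidim}, or its refinement along the lines of Section \ref{sec:depend} to handle the inevitable correlation between spectral components of a finite record, to conclude $C\hat m_{\kb}\to m_{\kb}$ in mean-square as $\min(\Nb)\to\infty$. The correction constant $C=1/\Gamma(\alpha+1)$ with $\alpha=1-1/\nu$ is precisely the rescaling applied in Step (ii), so the moments used in the dual problem converge to the true ones.

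Second, I would fix $\lambda>0$ and argue that the regularized dual problem \eqref{reg_dual_pb} is well-posed and its minimizer depends continuously on the moment data. Strict convexity of $J_{\nu,\lambda}$ and the existence of a unique interior minimizer $(\hat\pb,\hat\qb)$ with $\hat P,\hat Q>0$ are established in \cite{Zhu-Zorzi-2021-cepstral} under the feasibility assumption on $\cb$. Because $J_{\nu,\lambda}$ depends linearly (hence continuously) on $\cb$ and $\mb$, a standard stability lemma for the argmin of perturbed strictly convex programs yields convergence (in probability, and almost surely along subsequences) of the sample-based minimizer to the minimizer $(\pb_\lambda^*,\qb_\lambda^*)$ computed from the exact moments.

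Third, I would pass to the limit $\lambda\to 0$. By the correct-specification assumption, the true spectrum $\Phi^*$ has the form $[P^*/Q^*]^\nu$ with positive trigonometric polynomials $P^*,Q^*$, and by the theory in \cite{Zhu-Zorzi-2021-cepstral} this $(P^*,Q^*)$ is the unique primal optimum of the unregularized version of \eqref{primal_pb} with the exact moments, dual to a unique pair $(\pb_0^*,\qb_0^*)$. A $\Gamma$-convergence argument then shows $(\pb_\lambda^*,\qb_\lambda^*)\to(\pb_0^*,\qb_0^*)$ as $\lambda\to 0$, since the regularizer $\int P^{-(\nu-1)}\d\mu$ remains bounded on a neighborhood of $(\pb_0^*,\qb_0^*)$ (because $P^*>0$), so the perturbation of the objective is uniformly $O(\lambda)$ on compact sets bounded away from $\{P\equiv 0\}$. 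Consequently the recovered rational spectrum $\hat\Phi_\nu=[\hat P/\hat Q]^\nu$ converges to $\Phi^*$. Finally, Step (iv) recovers $W(\zb)$: since $\Phi^*=|W(\zb)|^{2\nu}$ on the multi-torus and $W$ exists by hypothesis, the factorability condition of \cite{geronimo2004positive} is automatically satisfied, yielding $W$ up to the usual unitary (or all-pass) ambiguity that is inherent in spectrum-only identification.

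The main obstacle will be the second limit, $\lambda\to 0$: one must show rigorously that the vanishing regularization does not destabilize the argmin, and in particular that $\hat P$ stays bounded away from zero so that the regularizer does not blow up and the $\Gamma$-convergence carries through. Tying the two limits together cleanly requires choosing a joint rate $\lambda=\lambda(\Nb)\to 0$ slowly enough compared with the statistical error in the moment estimates from the first stage; this coupling is the delicate technical point that underlies the simultaneous consistency claim.
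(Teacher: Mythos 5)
Your three-stage plan (consistency of the moment estimates, stability of the regularized dual argmin under perturbation of the data, then the vanishing-regularization limit followed by spectral factorization) is a sensible and essentially correct decomposition, but it is a genuinely different route from what the paper does: the paper's entire proof is a one-line appeal to Theorem~6.1 of \cite{Zhu-Zorzi-2021-cepstral}, which already packages the well-posedness, continuous dependence on $(\cb,\mb)$, and the $\lambda\to 0$ behaviour of \eqref{reg_dual_pb}. What your reconstruction buys is transparency about where the real work lies; what the paper's citation buys is that the delicate double limit you flag at the end --- choosing $\lambda=\lambda(\Nb)\to 0$ slowly relative to the statistical error so that the two limits can be taken jointly --- is handled once in the companion paper rather than re-derived here. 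Two caveats on your sketch. First, the step you yourself identify as the obstacle (the $\Gamma$-convergence argument and the coupling of $\lambda$ with $\Nb$) is asserted rather than proved, so as written your argument is an outline, not a proof; the paper avoids this by outsourcing it. Second, your claim that the factorability condition of \cite{geronimo2004positive} is ``automatically satisfied'' is only true for the limiting spectrum $\Phi^*=[P^*/Q^*]^\nu$: for finite $\Nb$ the estimated polynomials $\hat P,\hat Q$ need not satisfy the low-rank condition required for factorization into a single square, which is exactly the ``significant technical difficulty'' the paper acknowledges in Section~\ref{sec:sysid}; since the proposition is an asymptotic statement this does not invalidate your conclusion, but the phrasing should make clear that factorability is guaranteed only in the limit.
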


\begin{proof}
	The proposition is a direct consequence of Theorem~6.1 in \cite{Zhu-Zorzi-2021-cepstral}.
\end{proof}

In plain words the above result says: given a $d$-dimensional system, which is a cascade of $\nu\geq d/2+1$ identical {subsystems}, it is possible to construct a consistent estimator of it {using some second-order statistics of the output process}.

\section{Simulation Results}\label{sec:sims}

In this section, we present simulation results in $1$-d and $2$-d cascade system identification because theoretical results and algorithms for polynomial spectral factorization are available only when the dimension is such that $d\leq 2$, as highlighted in the previous section.

\subsection{Identification of $1$-D dynamical systems}\label{subsec:1-d_sims}

Consider a $1$-d linear time-invariant (LTI) system described by the transfer function $W(z)$. Furthermore, let $W$ have the cascade structure that corresponds to our optimal spectrum \eqref{Phi_optimal}, namely $W(z) = \left[ W_1(z) \right]^\nu$ where each subsystem
\begin{equation}
W_1(z) = \frac{b(z)}{a(z)}= \frac{\sum_{k=0}^n b_k z^{-k}}{\sum_{k=0}^n a_k z^{-k}}.
\end{equation}
Here the index set is $\Lambda_+ = \{0, 1, \dots, n\}$ and we take $n=2$ which corresponds to a second-order system for simplicity.

Let us write $a(\theta)\equiv a(e^{i\theta})$ for the value of the polynomial $a(z)$ on the unit circle.
If the white noise input $e_{t}$ has unit variance, then the spectral density of the output process $y_{t}$ is $\Phi(\theta)=[P(\theta)/Q(\theta)]^\nu$ where
$P(\theta) = |b(\theta)|^2$, $Q(\theta) = |a(\theta)|^2$.
We shall take the integer $\nu=3$ which is known by our identification procedure.
Moreover, we shall parametrize the polynomials in terms of their roots, that is, 
$a(z)=(1-r_{a,1} z^{-1})(1-r_{a, 2} z^{-1})$ and $b(z)=b_0(1-r_{b,1} z^{-1})(1-r_{b,2} z^{-1})$ with $b_0$ a normalization constant such that $P(z)=b(z)b(z^{-1})$ has $p_0=1$. The true system parameters are specified as
$r_{a,1}, r_{a,2} = 0.5 e^{\pm i\pi/3}$ (complex conjugate roots) and $r_{b,1}, r_{b,2} = -0.8$ and $0.6$ (real roots), so the system $W_1$ (also $W$) has \emph{real} parameters and is clearly stable and minimum-phase. This in turn gives
\begin{equation}\label{true_sys_paremters_1d}
\begin{aligned}
\ab & = \bmat a_0 & a_1 & a_2\emat = \bmat 1 & -0.5 & 0.25 \emat \ \text{ and } \ \\
\bb & = \bmat b_0 & b_1 & b_2\emat = \bmat 0.8872 & 0.1774 & -0.4259 \emat.
\end{aligned}
\end{equation}

Next, we follow the system identification procedure listed in the previous section, and demonstrate how to estimate the system parameters $\ab$ and $\bb$ from the samples of the output process $y_{t}$. 
Such samples of size $N$ are generated by feeding the normalized \emph{real} white noise $e_{t}$ to the true system (so $y_{t}$ is also real).
Then the covariances and generalized cepstral coefficients of the output $y_{t}$ with the indices in the set $\Lambda=\{ k\in\Zbb \,:\, -2\leq k\leq 2\}$ are estimated using \eqref{cov_estimates} and \eqref{est_cep_gen}.
Due to the symmetry, we only need to compute
\begin{equation*}
\hat{\cb} = \bmat\hat c_0 & \hat c_1 & \hat c_2\emat \ \text{ and } \ \hat{\mb} = \bmat \hat m_1 & \hat m_2 \emat.
\end{equation*}
In particular, the constant $C$ in Theorem~\ref{thm_gen_cep_est} for correcting the generalized cepstral estimation is equal to $1/\Gamma(\frac{5}{3})\approx 1.1077$ with $\alpha=\frac{\nu-1}{\nu} = 2/3$. The estimation error is defined as $\|[\hat{\cb}, \hat{\mb}]-[\cb, \mb]\|$ where the true $\cb$ and $\mb$ are evaluated using the true spectrum. Such errors against different sample sizes $N$ are shown in the orange line\footnote{The line is drawn to facilitate observation and has no meaning here.} of Fig.~\ref{subfig:err_sys_para_1d} in the double-logarithmic scale (base $10$). {At the same time, the yellow line indicates the errors when $\hat{\mb}$ is not corrected by the constant $C$}. To be more clear, one time series $y_t$ of length $N_{\max}=10^4$ is generated first and simulations are carried out using the same series truncated to different lengths $N$. One can see the general trend {of the orange line} that the error reduces as the sample size grows. The unique exception at $N=500$ can be explained as random fluctuation before the convergence of the estimators. In this particular example, the uncorrected generalized cepstral estimator begins to produce a large error (relative to that of the corrected version) after $N=2500$. Such an erratic behavior will be more apparent in the $2$-d example presented in the next subsection.

Then the regularized dual problem \eqref{reg_dual_pb} is solved on a discrete grid of size $K=N$ with the estimated $\hat{\cb}$, $\hat{\mb}$, and $\lambda=10^{-6}$ using a gradient descent algorithm.
The algorithm is initialized at $q_{0}=1$ and the rest variables equal to $0$ which corresponds to constant polynomials $P(\theta)=Q(\theta)\equiv1$. 
The iterations terminate when the norm of the gradient is less than $10^{-6}$. 
Given the optimal polynomials $\hat{P}$ and $\hat{Q}$, we proceed to compute their factors $\hat{a}$ and $\hat{b}$ using standard methods for ($1$-d) polynomial spectral factorization, e.g., the Bauer method \cite{sayed2001survey}.
After a comparison with \eqref{true_sys_paremters_1d}, we compute the error $\|[\hat{\ab}, \hat{\bb}]-[\ab, \bb]\|$ which is shown in the blue line of Fig.~\ref{subfig:err_sys_para_1d} for different values of $N$. 
The blue line shares the same trend with the orange line as the sample size $N$ increases, which is an expected result since the true system $W(z)$ belongs to the model class dictated by the solution form \eqref{Phi_optimal} of our optimization problem, as explained in Proposition~\ref{prop_identifiability}. Furthermore, in Fig.~\ref{subfig:true_vs_recon_spec_1d} we plot the true spectrum $\Phi(\theta)$ on the interval $[0,\pi)$ against the estimated spectra $\hat{\Phi}(\theta)$. The convergence of the estimated spectra to the true one is clearly observed.

\begin{figure}
	\begin{subfigure}[b]{.48\columnwidth}
		\includegraphics[width=\linewidth]{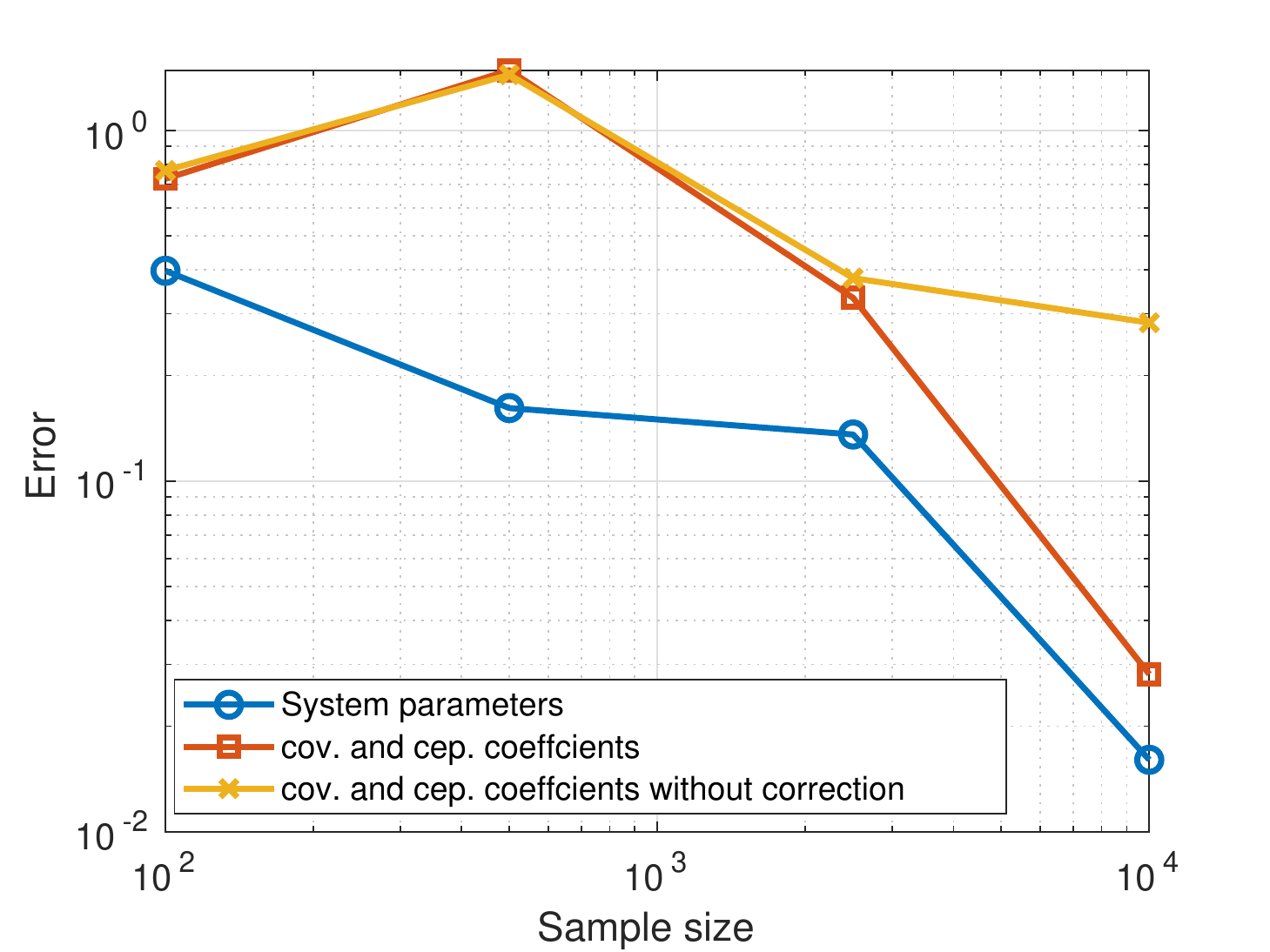}
		\caption{}
		\label{subfig:err_sys_para_1d}
	\end{subfigure}
	\hfill
	\begin{subfigure}[b]{.48\columnwidth}
		\includegraphics[width=\linewidth]{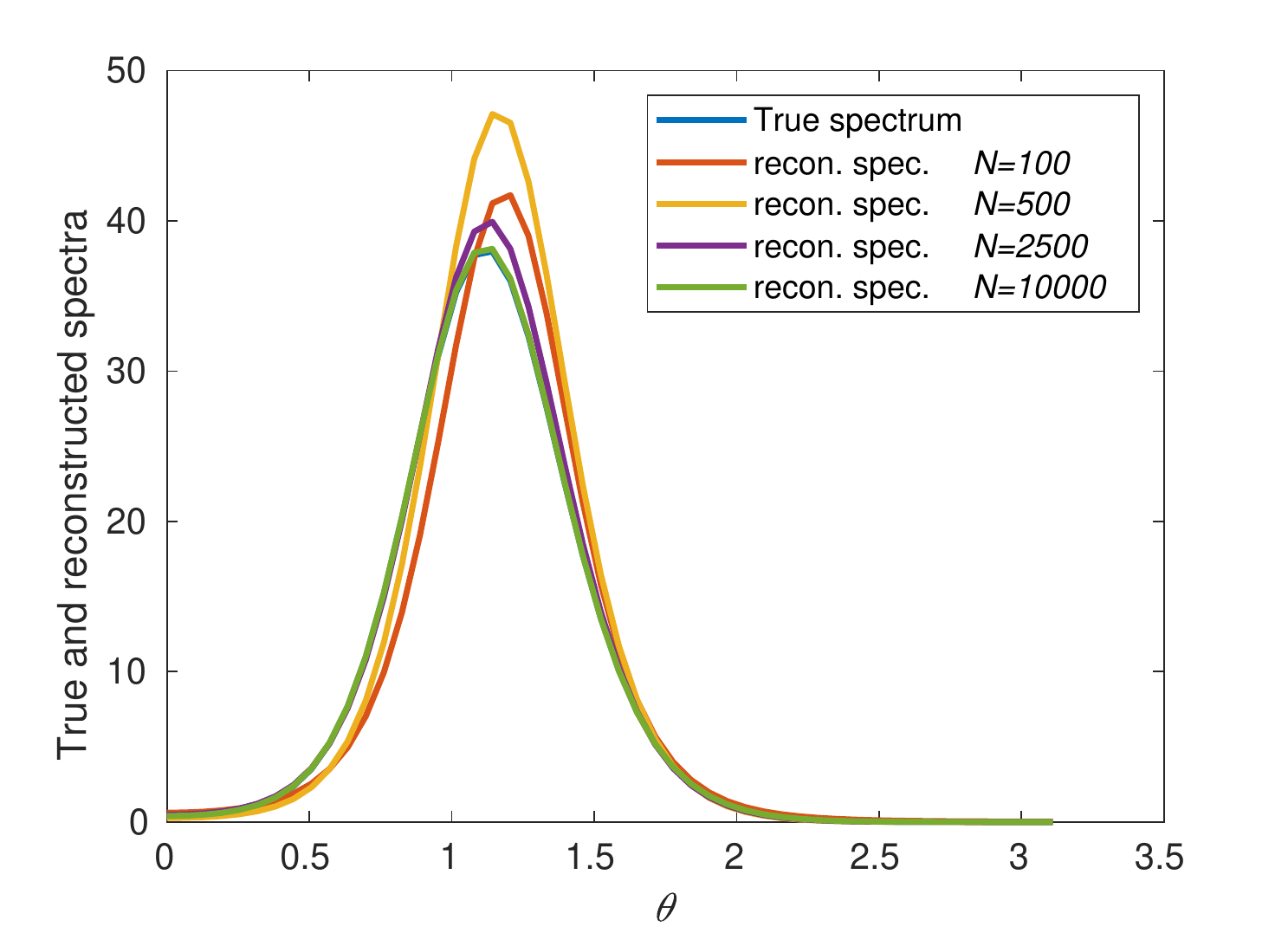}
		\caption{}
		\label{subfig:true_vs_recon_spec_1d}
	\end{subfigure}
	
	\caption{Simulation results in the $1$-d case: \emph{Left.} Estimation errors of the covariances and the generalized cepstral coefficients ({the orange line with correction and the yellow line without correction}) and the system parameters (blue line) when the sample size $N$ of the random process is equal to $100, 500, 2500, 10^4$. \emph{Right.} The true spectrum (blue line) $\Phi(\theta)$ of $y_{t}$ versus the estimated spectra $\hat{\Phi}(\theta)$ via solving the discrete version of the regularized dual optimization problem \eqref{reg_dual_pb} with $\lambda=10^{-6}$ and the sample sizes indicated above. Only the segment corresponding to $\theta\in[0, \pi)$ is shown due to the symmetry of the spectra of real processes. Notice that the estimated spectrum with $N=10^4$ virtually coincide with the true one.}
	\label{fig:sim_results_1d}
\end{figure}

\subsection{Modeling $2$-D random fields}\label{subsec:2-d_sims}

%
%
%

In this subsection, we consider the problem of identifying a $2$-d LTI system $W(z_1,z_2)$ from samples of a planar random process.
The procedure will be similar to the previous subsection which we will describe briefly.
Let $W(\zb) = \left[ W_1(\zb) \right]^\nu$ with $\nu=2$ and the subsystem
\begin{equation}
W_1(\zb) = \frac{b(\zb)}{a(\zb)}= \frac{\sum_{\kb\in\Lambda_+} b_\kb \zb^{-\kb}}{\sum_{\kb\in\Lambda_+} a_\kb \zb^{-\kb}},
\end{equation}
where the index set $\Lambda_+:=\{ (k_1,k_2)\in\Zbb^2 \,:\, 0\leq k_1,k_2\leq1\}$ so that $W_1$ (roughly) has order one in each dimension, the indeterminates $(z_1,z_2)$ are abbreviated as $\zb$, and $\zb^\kb$ stands for $z_1^{k_1} z_2^{k_2}$. 
For notational convenience, the value of the polynomial $a(\zb)$ on the unit torus is written as $a(\thetab)\equiv a(e^{i\theta_1}, e^{i\theta_2})$.
For simplicity, we also impose a separable form
$(1-r_{a,1} z_1^{-1})(1-r_{a,2} z_2^{-1})$ 
on the polynomials $a$, $b$, and take $|r_{a,j}|<1$ for $j=1,2$. 
The system parameters can be assigned via
\begin{equation*}
\bmat a_{0,0}&a_{0,1}&a_{1,0}&a_{1,1} \emat = \bmat 1&-r_{a,2}&-r_{a,1}&r_{a,1}r_{a,2} \emat.
\end{equation*}
It is convenient to collect the $2$-d system parameters into matrices. In our particular example, we take real parameters
\begin{equation}\label{true_sys_paremters}
A = \bmat 1&-0.7\\-0.5&0.35\emat, \quad
B = \bmat 0.6696&-0.5357\\-0.4018&0.3214\emat
\end{equation}
where $a_{k_1,k_2}=A(k_1+1,k_2+1)$ and similar for $b_{k_1,k_2}$. Notice that the constraint $p_{\zerob}=1$ translates into a normalization condition
$\|B\|_\F=1$ where the subscript $_\F$ denotes the Frobenius norm. The true spectrum $\Phi(\thetab)$ defined on the $2$-d domain is shown in Fig.~\ref{subfig:true_spec}.


\begin{figure}
	\begin{subfigure}[b]{.48\columnwidth}
		\includegraphics[width=\linewidth]{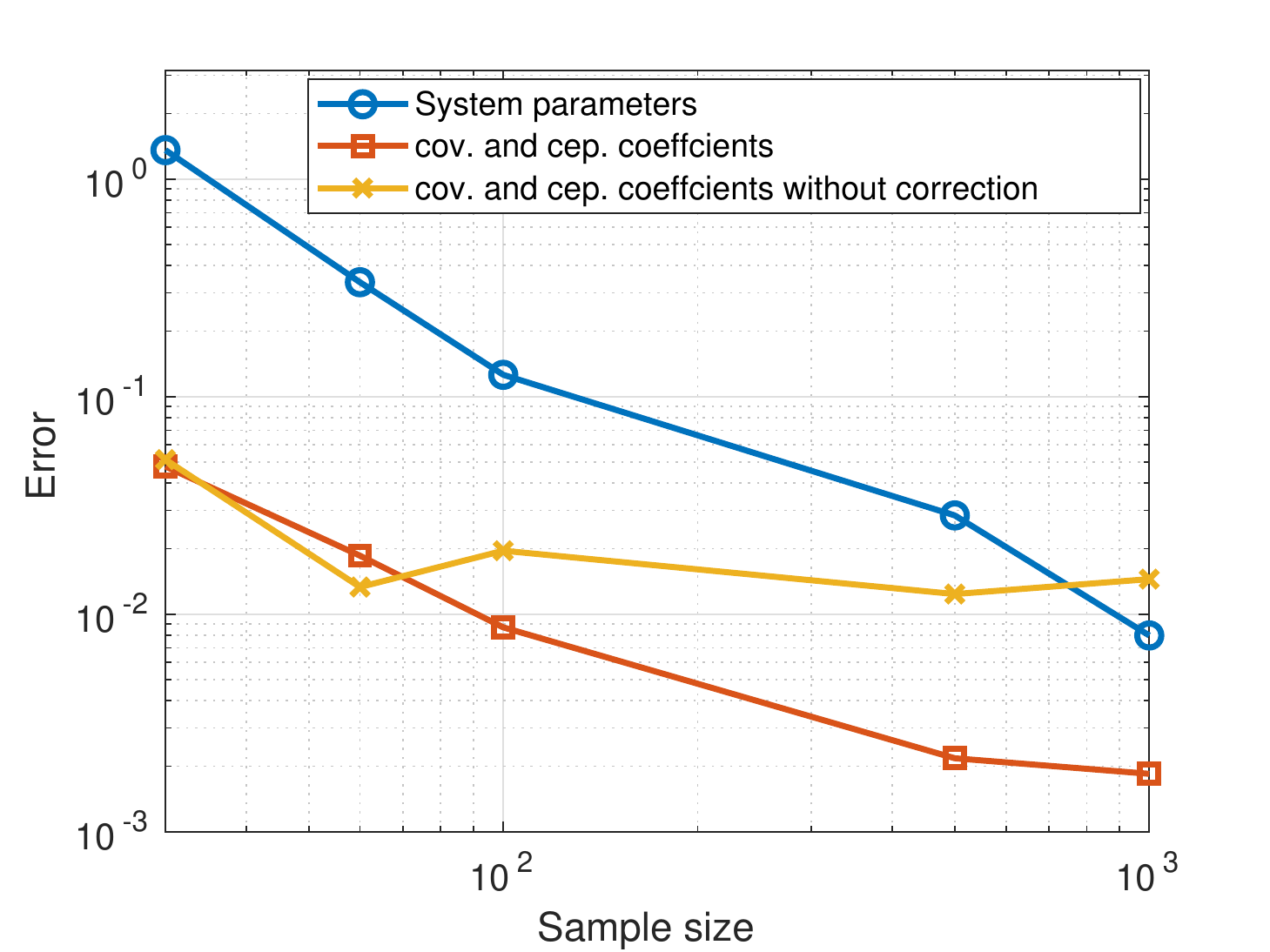}
		\caption{}
		\label{subfig:err_sys_para}
	\end{subfigure}
    \hfill
	\begin{subfigure}[b]{.48\columnwidth}
		\includegraphics[width=\linewidth]{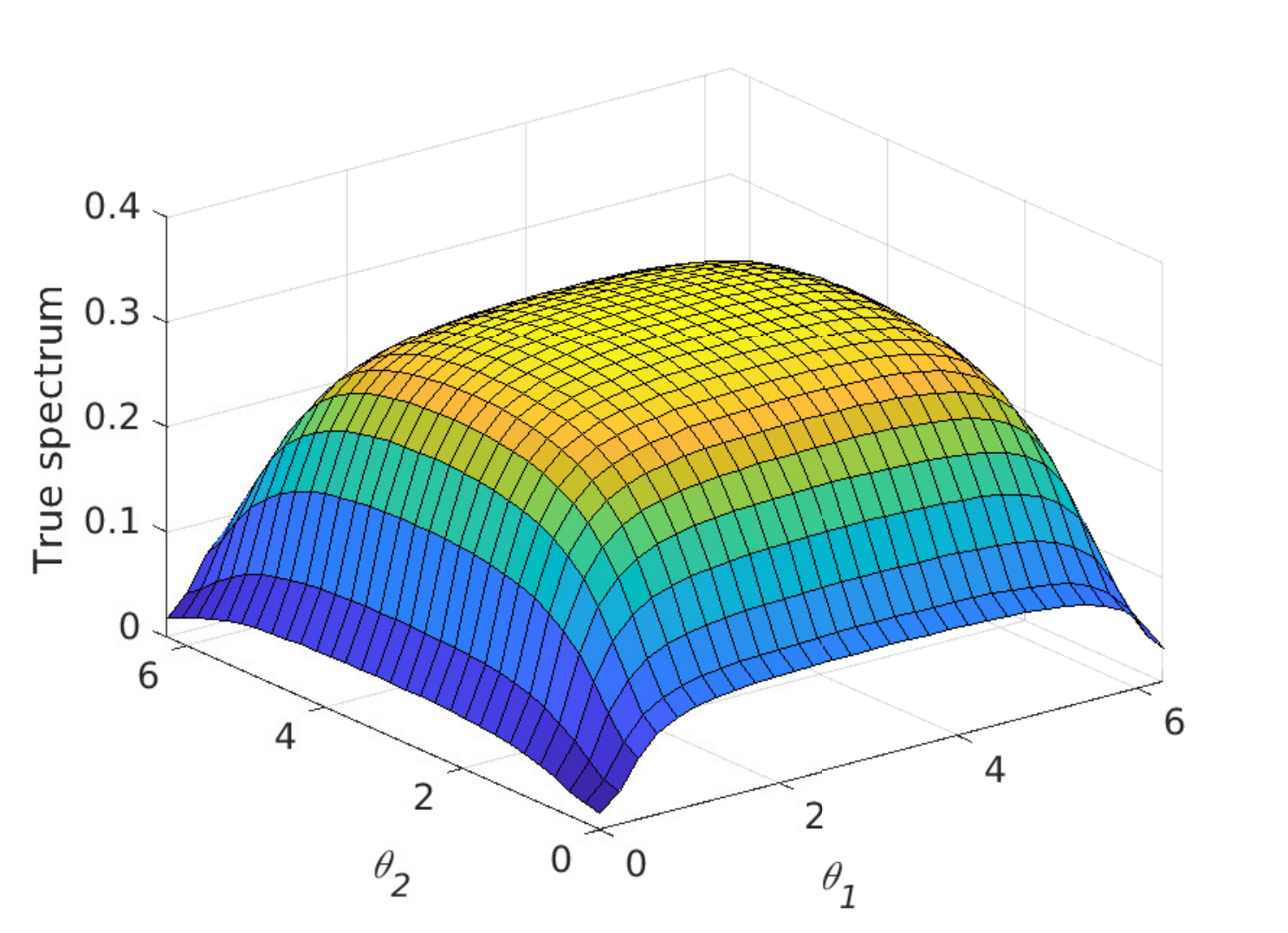}
		\caption{}
		\label{subfig:true_spec}
	\end{subfigure}

	\begin{subfigure}[b]{.48\linewidth}
		\includegraphics[width=\linewidth]{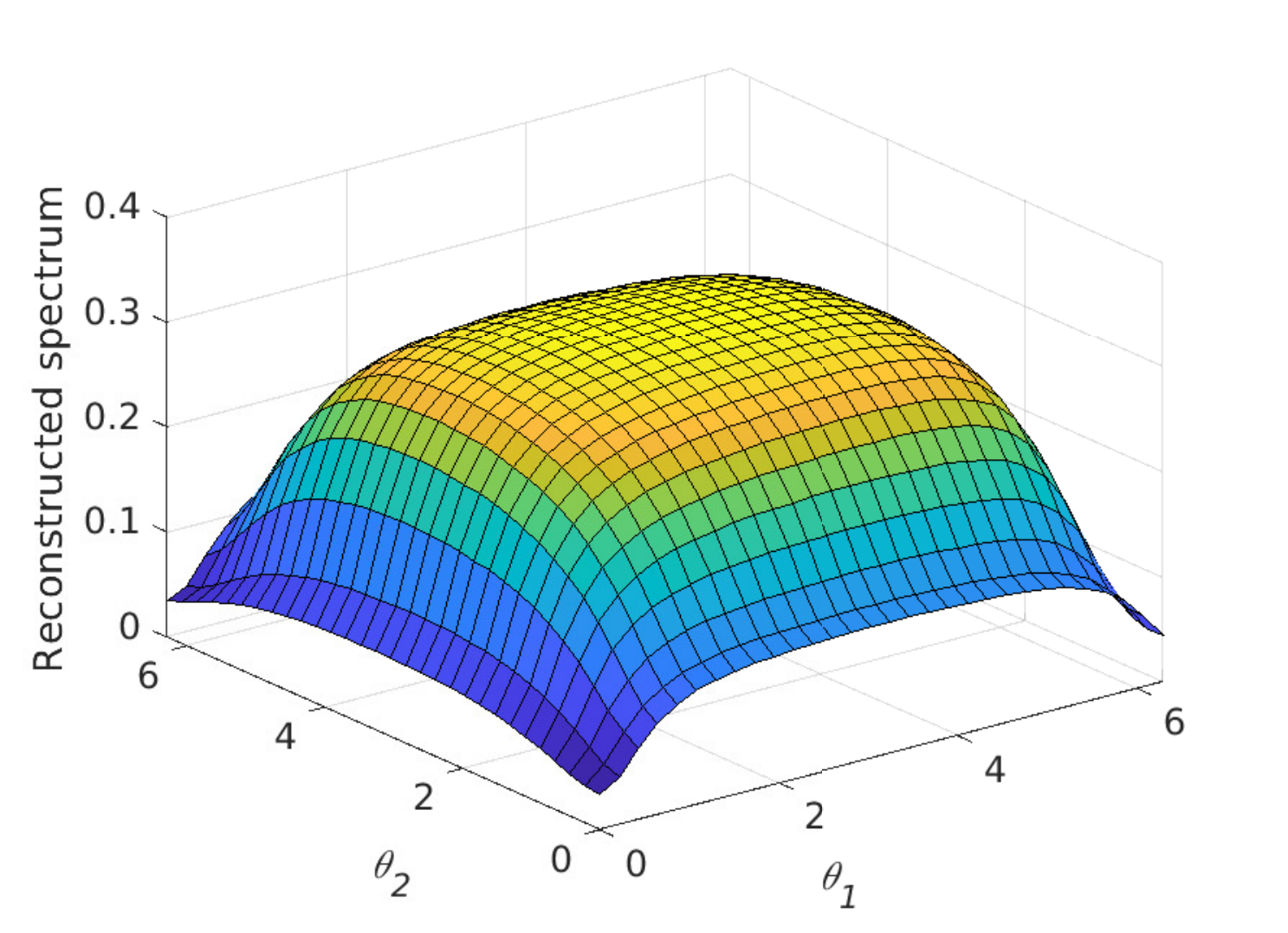}
		\caption{}
		\label{subfig:recon_spec}
	\end{subfigure}
    \hfill
	\begin{subfigure}[b]{.48\linewidth}
		\includegraphics[width=\linewidth]{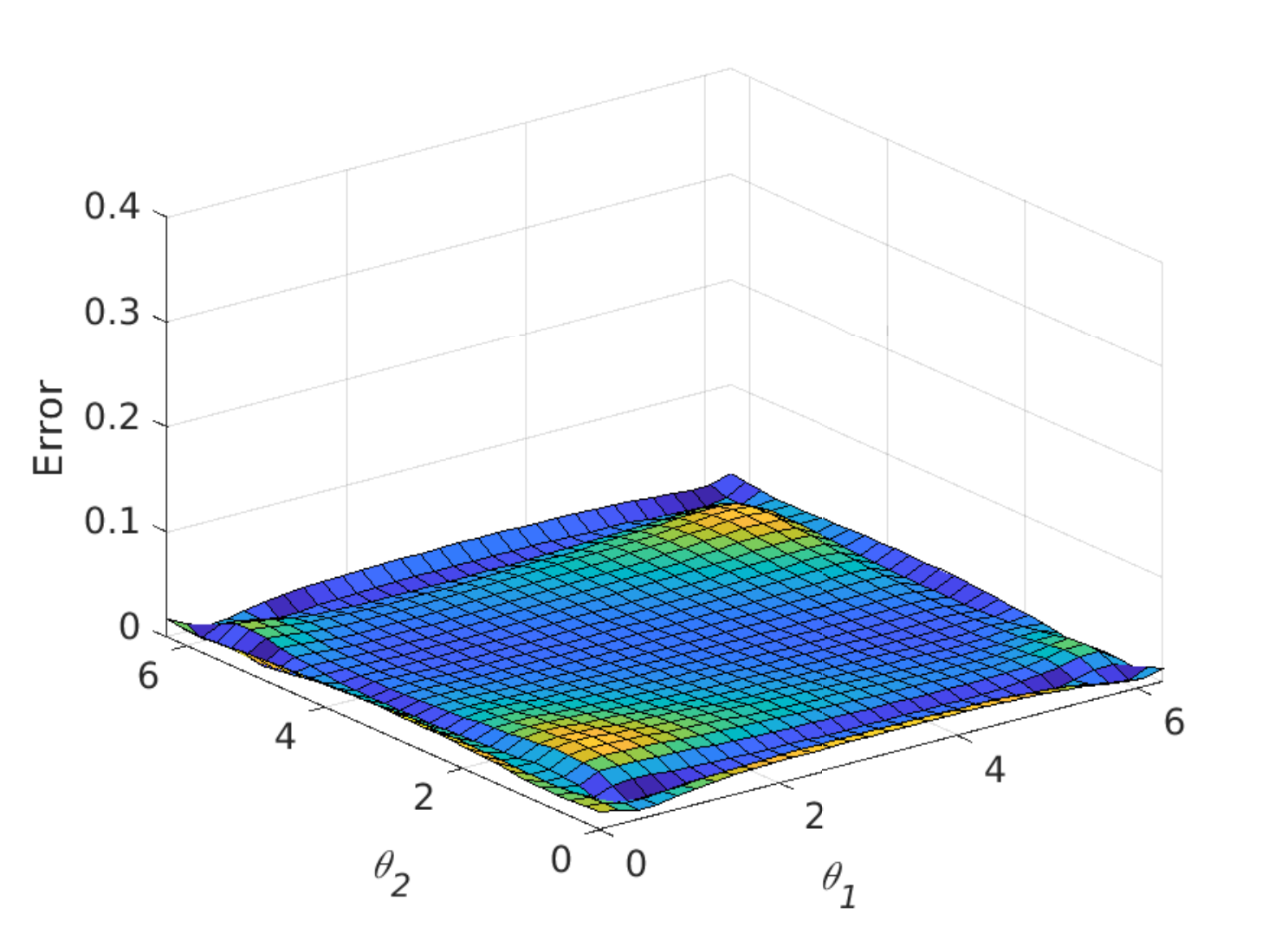}
		\caption{}
		\label{subfig:abs_err}
	\end{subfigure}

	\caption{Simulation results in the $2$-d case: \emph{Upper left.} Estimation errors of the covariances and the generalized cepstral coefficients (the orange line with correction and the yellow line without correction) and the system parameters (blue line) when the sample size $N_1=N_2=N$ of the random field is equal to $30, 60, 100, 500, 1000$. \emph{Upper right.} The true spectrum $\Phi(\thetab)$ of $y_{\tb}$. \emph{Lower left.} The estimated spectrum $\hat{\Phi}(\thetab)$ via solving the discrete version of the regularized dual optimization problem \eqref{reg_dual_pb} with $\lambda=10^{-6}$ and the sample size $N=100$. \emph{Lower right.} The pointwise absolute error $|\hat{\Phi}(\thetab) - \Phi(\thetab)|$ between Subfigs.~\ref{subfig:true_spec} and \ref{subfig:recon_spec}.}
	\label{fig:sim_results}
\end{figure}


Next we implement the system identification procedure in order to estimate the ``system matrices'' $A$ and $B$ from the samples of the output process $y_{\tb}$ with a size $\Nb=(N_1, N_2)$. 
The covariances and generalized cepstral coefficients of $y_{\tb}$ indexed by the set $\Lambda=\{ (k_1,k_2)\in\Zbb^2 \,:\, -1\leq k_1,k_2\leq1\}$ are estimated using the unwindowed periodogram as explained in Section~\ref{sec:multidim}. Due to the symmetry of $c_{\kb}$ and $m_{\kb}$ with respect to the origin, we only need to compute
\begin{equation*}
\begin{aligned}
\hat\cb  & = \bmat \hat c_{0,0}& \hat c_{0,1}& \hat c_{1,-1}& \hat c_{1,0}& \hat c_{1,1} \emat, \\
\hat\mb  & = \bmat \hat m_{0,1}& \hat m_{1,-1}& \hat m_{1,0}& \hat m_{1,1} \emat
\end{aligned}
\end{equation*}
which are put in the lexicographic ordering. In particular, the constant of correction $C$ in \eqref{ms_converg_hat_mk_multidim} is now $1/\Gamma(\frac{3}{2}) = 2/\sqrt{\pi}\approx1.1284$ since $\alpha=1-\frac{1}{\nu}=1/2$. The estimation error $\|[\hat{\cb}, \hat{\mb}]-[\cb, \mb]\|$ versus the sample size $N_1=N_2=N$ is plotted in the orange line of Fig.~\ref{subfig:err_sys_para} which shares the same trend as that in Fig.~\ref{subfig:err_sys_para_1d}: the error goes down as the sample size increases. The yellow line corresponding to the wrong generalized cepstral estimator, in contrast, deviates from the orange line as early as $N=100$ in this example. Obviously, the wrong estimator does not enjoy the consistency property since the error curve stops going down as $N$ further increases.

Then we solve the regularized dual problem \eqref{reg_dual_pb} on a discrete grid of \emph{fixed} size $\Kb=(30,30)$ with the estimated $\hat{\cb}$, $\hat{\mb}$, and $\lambda=10^{-6}$. 
The optimal spectrum $\hat{\Phi}(\thetab)$ returned by the solver is shown in Fig.~\ref{subfig:recon_spec} when the sample size is $N_1=N_2=100$. In this case, clearly the shape of the estimated spectrum is almost visually indistinguishable from that of the true spectrum. Moreover, the pointwise absolute error $|\hat{\Phi}(\thetab)-\Phi(\thetab)|$ is shown in Fig.~\ref{subfig:abs_err}. One can see that the error plot is quite flat. As a complement, we also compute the cumulative relative error on the whole grid $\|\hat{\Phi}-\Phi\|_\F/\|\Phi\|_\F=4.40\%$ which is reasonably small.

Once we have the optimal polynomials $\hat{P}$ and $\hat{Q}$, we can run the factorization algorithm in \cite[Theorems 1.1.1 \& 1.1.3]{geronimo2004positive} and compute factors $\hat{a}$ and $\hat{b}$. 
The numerical values of their coefficients in this particular instance with a sample size $N=100$ are reported in the following matrices 
\begin{equation*}
\hat{A} = \bmat 1.0590 & -0.6896 \\ -0.4503 & 0.2854\emat, \quad
\hat{B} = \bmat 0.7102 & -0.5282 \\ -0.3679 & 0.2688\emat.
\end{equation*}
The error here is $\|[\hat{A}, \hat{B}]-[A, B]\|_\F = 0.1259$ which, along with other errors for different $N$, is depicted in the blue line of Fig.~\ref{subfig:err_sys_para}. 
We can see the consistency of our identification procedure as reported in Proposition~\ref{prop_identifiability}.

\section{Conclusions}\label{sec:conclusions}

We have proposed an estimator for the generalized cepstral coefficients both in the unidimensional and multidimensional setting. We showed that it is consistent in the {special} case that the stochastic process is 
i.i.d. or periodic. We also provided a condition that guarantees the consistency in the {general} case {where} the spectral components are correlated. Such a condition seems to be satisfied {when} the stochastic process is generated by a stable rational filter.  Finally, we showed that this estimator can be used to build a consistent estimator for {a class of} cascade linear stochastic systems.



\bibliographystyle{IEEEtran}
\bibliography{references}

\end{document}